\documentclass{svjour3}
\smartqed  

\usepackage{varwidth,comment}

\usepackage[toc,page]{appendix}

\usepackage{amsmath,pgfplots,caption}
\usepackage{amssymb}
\usepackage{amsfonts}
\usepackage{graphicx,epstopdf}
\usepackage{mathrsfs,mathtools}
\usepackage{color}
\usepackage{subfigure,bbm,url}
\usepackage[numbers,sort]{natbib}

\usepackage{algorithmicx}
\usepackage{algorithm}
\usepackage[noend]{algpseudocode}
\makeatletter
\def\BState{\State\hskip-\ALG@thistlm}
\makeatother

\newcommand{\bigslant}[2]{{\raisebox{.2em}{$#1$}\left/\raisebox{-.2em}{$#2$}\right.}}
\def\Id{{\mathbb{I}}}

\newcommand\restr[2]{{
  \left.\kern-\nulldelimiterspace 
  #1 
  \vphantom{\big|} 
  \right|_{#2} 
  }}

\def\ZZ{{\mathbb{Z}}} 
\def\QQ{{\mathbb{Q}}} 
\def\RR{{\mathbb{R}}} 
\def\CC{{\mathbb{C}}} 

\def\NN{{\mathbb{N}}}


\def\sing{\text{sing}}
\def\crit{\text{crit}}
\def\reg{\text{reg}}
\def\scC{{\mathscr{C}}}
\def\scS{{\mathscr{S}}}
\def\sfP{{\mathsf{P}}}

\def\softO{\ensuremath{{\mathcal{O}}{\,\tilde{ }}}}
\def\bigO{\ensuremath{{\mathcal{O}}}}

\def\sfG{{\mathsf{G}}} 
\def\sfH{{\mathsf{H}}} 


\def\GL{{\mathrm{GL}}} 
\def\rank{\mathrm{rank}} 
\def\jac{{D}}
\def\minors{\text{minors}}

\def\X{{x}} 
\def\Y{{y}} 

\def\x{{x}} 
\def\y{{y}} 
\def\fiber{{t}}





\def\setD{{\mathcal{D}}}
\def\setV{{\mathcal{V}}}

\def\setZ{{\mathcal{Z}}}

\def\cc{{\mathcal{C}}}

\def\zarA{{\mathscr{A}}}
\def\zarU{{\mathscr{U}}}

\def\zarM{{\mathscr{M}}}

\def\zarW{{\mathscr{W}}}

\def\zarfiber{{\mathscr{T}}}


\newcommand{\zeroset}[1]{{\mathcal{Z}(#1)}}
\newcommand{\ideal}[1]{{I(#1)}}


\newcommand{\new}[1]{{{#1}}}                     

\newcommand{\rect}[1]{{#1}} 
\newcommand{\mynew}[1]{{{#1}}}

\pgfplotsset{compat=1.9}

\title{Real root finding for low rank linear matrices}

\begin{document}

\author{Didier Henrion \and Simone Naldi \and Mohab {Safey El Din}}

\institute{D. Henrion \at
  CNRS, LAAS, 7 avenue du colonel Roche, F-31400 Toulouse; France.\\
  Universit\'e de Toulouse; LAAS, F-31400 Toulouse, France.\\
  Faculty of Electrical Engineering, Czech Technical University in Prague, Czech Republic.
  \email{henrion@laas.fr} \\
  \and
  S. Naldi \at
  Univ. Limoges, CNRS, XLIM, UMR 7252, F-87000 Limoges, France.\\
  \email{simone.naldi@unilim.fr}\\
  \and
  M. {Safey El Din} \at
  Sorbonne Universit\'e, CNRS, Inria, Laboratoire d'Informatique de Paris 6,
  LIP6, Equipe PolSys, F-75252, Paris, France.\\
  \email{Mohab.Safey@lip6.fr}
}


\date{\today}

\maketitle

\begin{abstract}
  We consider $m \times s$ matrices (with $m\geq s$) in a real affine
  subspace of dimension $n$. The problem of finding elements of low rank
  in such spaces finds many
  applications in information and systems theory, where low rank is
  synonymous of structure and parsimony. We design computer algebra
  algorithms, based on advanced methods for polynomial system solving,
  to solve this problem efficiently and exactly: the input are the
  rational coefficients of the matrices spanning the affine subspace
  as well as the expected maximum rank, and the output is a rational
  parametrization encoding a finite set of points that intersects each
  connected component of the low rank real algebraic set. The
  complexity of our algorithm is studied thoroughly. It is 
  polynomial in $\binom{n+m(s-r)}{n}$. It improves on the
  state-of-the-art in computer algebra and effective real algebraic
  geometry. Moreover, computer experiments show the practical
  efficiency of our approach.  \keywords{{Symbolic Computation; Low
      rank matrices; Polynomial system solving; Real algebraic
      geometry.}}  \subclass{13-XX \and 14Q20 \and 12Y05 \and 68W30}
\end{abstract}

\section{Introduction}



\subsection{Problem statement}\label{statement}

Let $\QQ$, $\RR$ and $\CC$ be respectively the fields of
rational, real and complex numbers. Let \rect{$s$}, $m$, $n$, $r$
be positive integers with $0 \leq r {< s \leq m}$ and let $A_0,
\ldots, A_n$ be \rect{$m \times s$} matrices with entries in $\QQ$.
Let $x=(x_1,\ldots, x_n)$ be variables.
We consider the {\it linear matrix} $A(x)$ defined by
$$
 (\x_1, \ldots, \x_n) \mapsto A(x) = A_0+x_1 A_1+\cdots+x_n A_n.
$$
{By abuse of notation, we denote the vector $(A_0,A_1,\ldots,A_n) \in
(\QQ^{\rect{m\times s}})^{n+1}$ by $A$.}
Given $A$ as above, the set
$$
{\setD}_r = \big\{ x \in \CC^n \ \big| \ \rank \ A(x) \leq r \big\}.
$$
is called a determinantal variety. The goal of this paper is to design
an efficient algorithm for {deciding the emptiness of $\setD_r \cap \RR^n$
  and, if it is not empty, for computing at least one point in each
  connected component of $\setD_r\cap \RR^n$.}


Our algorithm is symbolic, that is its output is an exact encoding of finitely
many points whose coordinates are algebraic numbers, given by a rational
pa\-ra\-me\-tri\-za\-tion with coefficients in $\QQ$. {This is a vector
  $q=(q_0,q_1,\ldots,q_{n+1}) \in \QQ[t]^{n+2}$ of univariate polynomials
  such that
  {$q_{n+1}$ is square-free, $q_0={\partial q_{n+1}}/{\partial t}$
    and $\deg(q_i)<\deg(q_{n+1})$ for $1\leq i \leq n$}, and such that
  the following set
\begin{equation}
\label{ratpar} 
\setZ = \left\{ \left(\frac{q_1(t)}{q_0(t)},\cdots,\frac{q_n(t)}{q_0(t)}\right) \in \CC^n \: :\: q_{n+1}(t)=0 \right\}.
\end{equation}
is contained in $\setD_r$ and contains at least one point in each
connected component of $\setD_r \cap \RR^n$.
Such an encoding for finite algebraic sets goes back to the work of
Macaulay and Kronecker \cite{Kronecker82, Macaulay16} and has been extensively
used and developed for computer algebra methods for solving polynomial
systems (see e.g. \cite{GiMo89, Rouillier99, Lecerf2000}).

 


\subsection{Motivations} \label{motiv}

{The problem of finding low rank matrices in a given affine space
has many applications in systems, signal and information engineering, where
low rank typically corresponds to sparsity and structure requirements.
For instance, in the context of semi\-de\-finite
programming (SDP) hierarchies for polynomial optimization
\cite{l10}, low rank moment matrices provide guarantees
of global optimality of a convex relaxation of a non-convex 
optimization problem.}
Similarly, the geometry of low rank structured matrices
  (e.g. Hurwitz, Hankel, Toeplitz,
  resultant matrices) is pervasive in algebraic approaches to
  information engineering (including systems control, signal
  processing, computer vision).
{In these cases,
  the given affine subspace lies in the linear space of symmetric (or more
  structured) matrices, while in this paper we address the problem from a more
  general point of view.}
\begin{example}
  {The paper \cite{kileeldistortion} studies distortion varieties, special
    algebraic varieties arising from computer vision. These have
    determinantal structure. For instance, the distortion variety
    in \cite[Ex. 1.1]{kileeldistortion} is the zero set of all $2 \times 2$ minors
    of the matrix
    $$
    {A}(x_1,\ldots,x_{11}) =
    \left(
    \begin{array}{cccccc}
      x_1 & x_2 & x_3 & x_4 & x_5 & x_6 \\
      x_7 & x_8 & x_9 & x_{10} & x_6 & x_{11} \\
    \end{array}
    \right).
    $$
    The matrix $A$ defines a linear space of co-dimension 1 in the
    space of $2 \times 6$ matrices (indeed the $(1,6)-$entry equals
    the $(2,5)-$entry), and we are interested in the locus of rank-one
    matrices of this form. The real points of this variety can be
    sampled with the algorithm developed in this paper, since the
    regularity assumptions needed by our algorithm (defined in Section
    \ref{genass}) are satisfied by this example.}
\end{example}
The specific geometry of low rank manifolds can be exploited to design
efficient nonlinear local optimization algorithms
\cite{absil2008}.
     {Linear matrices
  and their loci of rank defects are the object of the so-called low
  rank approximation problem \cite{OttSpaStu14} and model problems arising in
  medical imagery \cite{BFJSV16}.}

In our paper, we are not after trying to solve approximately
large-scale problem instances with floating point arithmetic. In
contrast, our focus is on symbolic computation and rigorous
algorithms.  This means that we are not concerned with numerical
scaling and conditioning issues. 
We provide mathematical guarantees of exactness of the output of
our algorithm, under the assumption that the input is also exactly
provided in rational arithmetic and satisfies some algebraic
assumptions that are specified below. Obviously, these guarantees come
with a price, and our algorithm complexity is exponential in the
number of variables or problem size, and hence limited to ``small''
dimensions.  But this is not specific to our algorithm, this
limitation is shared with all symbolic computation methods: our
algorithm should be applied to small-size problems for which it is
absolutely crucial to find exact solutions.

{Still, in this context, complexity issues are topical and can lead to practical
improvements.} The main difference with the state-of-the-art is that the
complexity achieved by our algorithm is essentially quadratic in a
multilinear B\'ezout bound on the maximum number of complex solutions
encoded by the output. This bound is itself dominated by
\rect{${\binom{n+m(s-r)}{n}}^3$}. Hence, for particular sub-classes of the
problem, for example when the
\rect{maximum dimension} of the matrix is fixed, the
multilinear bounds (and hence the complexity) are polynomial in the
number of variables. We will see that this leads to faster implementations at
the end of the paper.



\subsection{State of the art}


We distinguish in the state-of-the-art three subproblems. The first
one is on computing sample points in each connected component of real
algebraic sets, hence without taking care of the determinantal
structure we consider here. Next, we review on previous work taking
care of the determinantal structure but in the context of
zero-dimensional algebraic sets. Finally, we consider real algebraic
sets defined by rank constraints on matrices with polynomial entries.

Computing real solutions of systems of $n$-variate polynomial equations is
a central question
in computational geometry and {effective real algebraic
  geometry}. Since one typically deals with positive dimensional
solution sets, one possible approach is to design algorithms computing
a finite set intersecting each connected component {of the real
  solution set}. The complexity of Tarski's algorithm \cite{Tarski} was
not elementary recursive, while Collins's Cylindrical Algebraic Decomposition
\cite{c-qe-1975} is doubly exponential in $n$. Since Thom-Milnor bound
for the maximum number of connected components of a real algebraic set
\cite[Theorem 7.23]{BaPoRo06} is singly exponential in $n$, tremendous
efforts have been made to obtain optimal complexities.

Grigoriev and Vorobjov introduced in \cite{GV88} the critical point
method which culminates with the algorithms in \cite[Ch.13]{BaPoRo06}
running in time singly exponential in $n$. The algorithms in \cite{BGHM1,SaSc03}
also rely on the computation of critical points: On inputs of degree $\leq d$,
they lead to complexities which are essentially cubic (resp. quartic) in $d^n$
for the smooth (resp. singular) case. These techniques have also been used in
the context of polynomial optimization \cite{GS14}.

Dedicated algorithms in e.g. \cite{degeneracyloci,GiLeSa01} or
Gr\"obner bases \cite{FauSafSpa13} can be used to compute generic
points in algebraic varieties in presence of determinantal structure.
Observe that, by the way, these are not sufficient to be applied to the
problem of real root finding for positive dimensional real algebraic sets.
In the context of {determinantal varieties} $\setD_r$, the following cases
have been already treated:
\begin{itemize}
\item \rect{$m=s$ and} $r=s-1$: {in \cite{HNS2014}, we designed
    a dedicated algorithm for computing sample points in each
    connected component of the studied real algebraic set under some
    genericity assumption on the input matrix pencil;}
\item {$m=s$ and the considered matrix is {\it symmetric}: we
    designed in \cite{HNS16} a dedicated algorithm for this situation
    without any other constraint on $r$ than $r\leq m-1$, again under
    some genericity assumption on the input matrix pencil. In
    \cite{HNS2015a}, we also tackle the situation where the linear
    matrix is Hankel.}
\end{itemize}

{Observe that the cases $m\neq s$ and arbitrary $r$ were
  pending.  In the current paper, we deal with the case $m\neq s$
  (assuming without loss of generality that $m\geq s$) and arbitrary
  $r\leq m-1$. This paper builds on the previous work \cite{HNS2014}:
  the spirit and the statement of our main result is rather close to
  this previous work but many of the techniques used in \cite{HNS2014}
  cannot be applied {\it mutatis mutandis} to the more general setting
  we consider here and need to be adapted and generalized.}

  {Moreover, we highlight that, contrarily to our previous contribution \cite{HNS2014} concerning determinantal hypersurfaces, in this paper we explicitely describe the dependencies between the choice of parameters during the main algorithm. }




\subsection{Paper outline}

The algorithm described in this paper, with input a \rect{$m \times s$} linear matrix $A(x) = A_0+x_1A_1+\ldots+x_nA_n$, with {$m\geq s$,} $A_i \in \QQ^{\rect{m \times s}}$, $i=0,1, \ldots, n$, and an integer $r \leq s-1$, computes a rational parametrization of a finite set intersecting each connected component of $\setD_r \cap \RR^n$. The design of the algorithm is intended to take advantage of the special structure of the input problem and hence to behave better than algorithms based on the critical point method that solve the same problem in a more general setting.

{ Since algebraic sets defined by minors of fixed size of a
  polynomial matrix are generically singular, the input of our
  algorithm does not satisfy regularity properties. Hence, the first
  step is to generate a second algebraic set $\setV_r$, defined by
  quadratic equations $A(x) Y(y)=0$, where $Y(y)$ is a rectangular
  matrix whose columns generate the kernel of $A(x)$. The set we have
  obtained is a lifting of $\setD_r$, which is traditionally called an
  {\it incidence variety}.  }

We investigate properties of this incidence variety, {proving that
  unlike $\setD_r$, the lifted set $\setV_r$ is regular (smooth and
  equidimensional) when the input matrices $A=(A_0,A_1,\ldots,A_n)$
  lie outside a given algebraic hypersurface in
  $(\QQ^{\rect{m \times s}})^{n+1}$.} We show that our problem can be
reduced to compute finitely many critical points of the restriction of
a general linear projection to this lifted set. The system that
defines these critical points has a special sparsity structure,
{namely it is bilinear in three groups of variables (the variables
  $x$ describing $\setD_r$, the variables $y$ encoding the kernel, and
  Lagrange multipliers $z$).} Using the symbolic homotopy algorithm in
\cite{SaSc17} (which builts upon the one in
\cite{jeronimo2009deformation}), one can compute a rational
parametrization of these critical points by exploiting this sparsity
structure. We establish a bound $\delta$ on the degree of the
parametrization, and, using \cite{SaSc17}, we show
that the complexity is essentially quadratic on $\delta$. This bound
is dominated by \rect{${\binom{n+m(s-r)}{n}}^3$}.  {Note that this
  complexity estimate does not take into account the cost of checking
  that the genericity assumption on the input is satisfied, which we
  suppose to be true.}

  Moreover, we provide computer experiments that show that our
  strategy allows to tackle problems that are unreachable by
  implementations of other generic algorithms based on the critical
  point method.



\section{Definitions and notation}\label{sec:notations}

\subsection{Basic notions}


We denote by $\QQ^n$ (resp. $\CC^n$) the set of vectors of
length $n$ with entries in $\QQ$ (resp. $\CC$).
A subset $\setV \subset \CC^n$ is an affine algebraic variety
(equivalently affine algebraic set) defined over $\QQ$ if it is the
common zero locus of a system of polynomials
$f = (f_1, \ldots, f_q) \in \QQ[x]^q$, with $x=(x_1, \ldots, x_n)$. We
also write $\setV = f^{-1}(0) = \zeroset{f}$. Algebraic varieties in
$\CC^n$ define the closed sets of the so-called Zariski
topology. Zariski open subsets of $\CC^n$ are sets whose complement
are Zariski closed; they are either empty or dense in $\CC^n$.

The set of all polynomials vanishing on an algebraic set $\setV$ is an
ideal and it is denoted by $\ideal{\setV} \subset \QQ[x]$.
This ideal is radical (i.e. $g^k \in \ideal{\setV}$ for some integer $k$
implies that $g \in \ideal{\setV}$) and it is generated by a finite set
of polynomials, say $f=(f_1, \ldots, f_p)$. We also write
$\ideal{\setV} = \langle f_1,\ldots,f_p \rangle = \langle f \rangle$
when a set of generators is known.

Let $\GL_n(\CC)$ (resp. $\GL_n(\QQ)$) be the set of non-singular
$n \times n$ matrices with entries in $\CC$ (resp. $\QQ$). The identity
matrix is denoted by $\Id_n$. Given a matrix $M \in \GL_n(\QQ)$ and a
polynomial system $\X \in \CC^n \mapsto f(\X) \in \CC^p$ we denote by
$f \circ M$ the polynomial system $\X \in \CC^n \mapsto f(M\, \X) \in \CC^p$.
If $\setV = \zeroset{f}$, the image set $\zeroset{f \circ M}=\{\X \in \CC^n : f(M\X) = 0\}
=\{M^{-1}\X \in \CC^n : f(\X) = 0\}$ is denoted by
$M^{-1}\setV$. Given $q \leq n$ and $M \in \CC^{m \times m}$, we denote by
$\minors(q, M)$ the set of determinants of $q \times q$ submatrices of
$M$. {The transpose of a matrix $M$ is denoted by $M^T$.}

For $f \in \QQ[x]^q$, we denote by $\jac f$ the Jacobian matrix of
$f$, that is the $q \times n$ matrix $\jac f = (\frac{\partial
  f_i}{\partial \X_j})_{i,j}$.  When $f$ generates a radical ideal,
the codimension $c$ of $\zeroset{f}$ is the maximum rank of $\jac f$
evaluated at points in $\zeroset{f}$. Its dimension is $n-c$. The
algebraic set $\setV=\zeroset{f}$ is said irreducible, if it is not
the union of two algebraic sets strictly contained in
$\zeroset{f}$. If $\setV$ is not irreducible, it is decomposable as
the finite union of irreducible algebraic sets, called the irreducible
components.  If all the irreducible components have the same
dimension, $\setV$ is equidimensional.  The dimension of $\setV$
coincides with the maximum of the dimensions of its components.

Let $f \colon \CC^n \to \CC^q$ generate a radical ideal, and let $\setV = \zeroset{f}$ be
equidimensional of dimension $d$. A point $\x \in \setV$ such that the
rank of $\jac f$ is equal to $n-d$ is a regular point, otherwise is a
singular point. We denote by $\reg\:\setV$ and $\sing\:\setV$
respectively the set of regular and singular points of $\setV$.

Let $f \colon \CC^n \to \CC^q$ generate a radical ideal, and let $\setV = \zeroset{f}$ be
equidimensional of dimension $d$. Let $g \colon \CC^n \to \CC^p$. {A point
$x \in \reg\:\setV$ is a critical point of the restriction of $g$ to $\setV$
if the minors of size $n-d+p$ of the extended Jacobian
matrix $\jac (f,g)$ vanish at $x$.}
The Zariski-closure of the set of critical points is denoted by $\crit(g,\setV)$.
{Let $\pi_1 \colon \CC^{n} \to \CC$
  be the projection $\pi_1(x)=x_1$ and let $\jac_1 f$ be the matrix
  obtained by deleting the first column of $\jac f$. Then
  $\crit(\pi_1,\setV)$ is equivalently defined by the zero set of
  {the polynomials in} $f$ and the maximal minors of
  $\jac_1 f$. {The set $\crit(\pi_1,\setV)$ is also called a {\it
      polar variety} when $\setV$ is smooth.  }

\subsection{Incidence variety} \label{ssec:smoothness:incvar}

Let $A=(A_0, A_1, \ldots, A_n)$ be
\rect{$m \times s$} matrices {($m\geq s$)} with entries in
$\QQ$, and $A(x) = A_0 + x_1 A_1 + \cdots + x_n A_n$ the associated
linear matrix.  If
$\x \in \setD_r = \{ x \in \CC^n : \rank \, A(x) \leq r\}$, the
\rect{right} kernel of $A(x)$ \rect{is a subspace of} dimension
$\geq s-r$ \rect{in $\CC^s$} by linear algebra.

We introduce
\rect{$s(s-r)$} variables
$y=(y_{1,1}, \ldots, y_{\rect{s,s-r}})$, stored in a \rect{$s \times (s-r)$} linear matrix
\[
Y(y)=
\left(
\begin{array}{ccc}
y_{1,1} &  \cdots  & y_{1,s-r} \\
\vdots &         & \vdots  \\
\vdots &         & \vdots  \\
\rect{y_{s,1}} & \cdots & \rect{y_{s,s-r}}
\end{array}
\right)
\]
and, for $U \in \QQ^{\rect{(s-r) \times s}}$, 
we define the {\it incidence variety} associated to $(A, U)$ as
\begin{equation}
\setV_r(A, U) \coloneqq \big\{ (x, y) \rect{\in \CC^n \times \CC^{s(s-r)}}: A(x) Y(y) = 0, U Y(y) - \Id_{s-r} = 0 \big\}.
\end{equation}

Remark that the matrix $Y(y)$ has full rank $\rect{s-r}$ if and only if there exists
$U \in \QQ^{\rect{(s-r) \times s}}$ of full rank such that $U Y(y) - {\Id_{s-r}} = 0$.
For $A \in (\CC^{\rect{m \times s}})^{n+1}$, $U = (u_{i,j})_{1 \leq i \leq \rect{s-r}, 1 \leq j \leq \rect{s}}
\in \QQ^{(s-r) \times \rect{s}}$, 
\rect{and $c \coloneqq (m+s-r)(s-r)$}, define
\[
\begin{array}{lrcl}
f(A, U) : & \CC^{n+\rect{s(s-r)}} & \to & \CC^{c} \\
                  & (x,y) & \mapsto &  (A(x)  Y(y),\: U  Y(y) - \Id_{s-r})
\end{array}.
\]
Remark that $\setV_r(A, U) = \zeroset{f(A, U)}$ and that the
projection of $\setV_r(A, U)$ over the $x-$space is contained in the
determinantal variety $\setD_r$, by definition.  We denote this projection
map by $\Pi_X$.


\subsection{Data representation}
\label{sec:data}

The input is a \rect{$m \times s$} linear matrix
$A(x) = A_0+x_1A_1+\cdots+x_nA_n$, \rect{with $m \geq s$}, encoded by
the {vector of defining matrices $A=(A_0, A_1, \ldots, A_n)$},
with coefficients in $\QQ$, and an integer $r$ such that $r \leq s-1$.
The vector $A$ is understood as a point in
$(\QQ^{m \times \rect{s}})^{n+1}$.

The output is a finite set sampling the connected components of
$\setD_r \cap \RR^n$.  Indeed, the initial problem is reduced to
isolating the real solutions of an algebraic set $\setZ \subset \CC^n$
of dimension at most $0$, represented by a rational parametrization
$q = (q_0, q_1, \ldots, q_n, q_{n+1}) \in \QQ[t]^{n+2}$, that is with
the representation as in \eqref{ratpar}.



\subsection{Genericity assumptions}
\label{genass}

Our algorithm works under some assumptions on the input $A$ 
(that we denote with the letters $\sfG_1$ and $\sfG_2$).
{These are natural regularity assumptions about the singular
  locus of the locus of low-rank matrices and about smoothness of algebraic
sets, that are often satisfied in practice.}
We will prove below in Section \ref{sec:algo} that these are generic.
We recall that the parameter $r$ is fixed and it holds $0 \leq r < s \leq m$.

\textit{Property $\sfG_1$}. {A $m \times s$ linear matrix $A$
satisfies $\sfG_1$
  if, for all $0\leq p \leq r$, $\setD_p \subset \CC^n$ is either
  empty or \rect{$n-(m-p)(s-p)$}-equidimensional,
  $\sing(\setD_p) = \setD_{p-1}$ and the ideal generated by the
  $(p+1, p+1)$ minors of $A(x)$ is radical.}

{Our algorithm takes as input a linear matrix $A(x)$ assuming
  that $A$ satisfies $\sfG_1$; we will prove that $\sfG_1$ holds
  generically in the sequel. The second property, which we often refer
  to as a regularity property, is defined for any polynomial system}.



\textit{Property $\sfG_2$}. {A polynomial sequence $h = (h_1, \ldots, h_{\rect{k}}) \in
\QQ[x_1, \ldots, x_n]^{\rect{k}}$ satisfies $\sfG_2$ if
\begin{itemize}
\item the ideal $\langle h \rangle$ is a radical ideal of co-dimension $k$, and
\item the algebraic set $\zeroset{h} \subset \CC^n$ is either empty or
  smooth and equidimensional.
\end{itemize}}






\section{Algorithm: description, correctness, complexity} \label{sec:algo}

In this section, we describe the algorithm {\sf LowRank}, prove its
correctness and estimate its {arithmetic} complexity.

\subsection{Formal description}
The input of {\sf LowRank} is a couple $(A,r)$, where $A$ is a tuple
of $n+1$ matrices $A_0, A_1, \ldots, A_n$, of size \rect{$m \times s$}
($m\geq s$), with entries in $\QQ$, and $r \leq s-1$ is an
integer. The algorithm is probabilistic and, upon success, its output
is a rational parametrization encoding a finite set of points
intersecting each connected component of the real algebraic set
$\{x \in \RR^n : \rank \, A(x) \leq r\}$ as described in Section
\ref{sec:data}. 

\subsubsection{Notation} \label{sub:sub:notation}


Recall that given $A \in (\CC^{\rect{m \times s}})^{n+1}$ and $U \in \CC^{(s-r) \times \rect{s}}$,
the polynomial system $f(A, U)$ (of cardinality $c = (m+s-r)(s-r)$) and its zero locus
$\setV_r(A, U)$ have been defined in Section \ref{ssec:smoothness:incvar}.

{\it Change of variables.}
Let $M \in \GL_n(\CC)$. As already explained in Section
\ref{sec:notations}, we denote by $A \circ M$ the affine map
$x \mapsto A(M\,x)$ obtained from $A$ by applying a change of
variables induced by the matrix $M$. In particular
$A = A \circ \Id_n$.  For $M \in \GL_n(\CC)$, and for all
$A \in (\CC^{\rect{m \times s}})^{n+1}, U \in \CC^{(s-r) \times
  \rect{s}}$,
we consequently denote by $f(A \circ M, U)$ the polynomial system
$f(A, U)$ applied to $(M\,x,y)$, and by
$\setV_r(A \circ M, U) = \zeroset{f(A \circ M, U)}$.

{\it Fibers.}
{Given $w \in \CC^n$, we introduce the notation $\pi_w$ for the map
$\pi_w : \CC^n \to \CC$, $\pi_w(x) = w^Tx$, and $\Pi_w : \CC^{n+s(s-r)} \to \CC$,
$\Pi_w(x,y) = w^Tx$, that is $\Pi_w = \pi_w \circ \Pi_X$}. For $w \in \CC^n$ and
$\fiber \in \CC$, we define
\[
\begin{array}{lrcl}
f_{w,\fiber} : & \CC^{n+\rect{s(s-r)}} & \to & \CC^{c+1} \\
           & (x,y) & \mapsto & (f(A,U),\:\Pi_w(x,y)-\fiber)
\end{array}
\]
and denote by
$\setV_{r,w,\fiber}(A, U) = \zeroset{f_{w,\fiber}} \subset \CC^{n+\rect{s}(s-r)}$
the section of $\setV_r$ with the linear space defined by $\Pi_w(x,y)-t=0$.
When parameters are clear from the context, we use the shorter
notation $\setV_{r,w,\fiber}$.

{For $A \in (\CC^{m \times s})^{n+1}$, and
  $w \in (\CC \setminus \{0\})^n$ we denote by
  $\restr{A}{w,t} \in (\CC^{m \times s})^n$ the linear matrix obtained
  by eliminating one variable (up to renaming variable, $x_1$) from
  $A$ using the affine equation $w^t\ x-t=0$.}


{\it Lagrange systems.}  
{Given $w \in \CC^n$, we define}
\[
\begin{array}{lrcl}
\ell(A, U, w) : &  \CC^{n+\rect{s(s-r)+c}} & \to     & \CC^{n+\rect{s(s-r)+c}} \\
                     & (x,y,z)            & \mapsto & (f(A,U), z^T\jac f - (w,0)^T)
\end{array}
\]
where $z=(z_1, \ldots, z_{c})$ is the column vector of Lagrange
multipliers {and $(w,0) \in \CC^{n+s(s-r)}$.
Let $\setZ(A, U, w) = \zeroset{\ell(A, U, w)} \subset \CC^{n+s(s-r)+c}$}.

\subsubsection{Subroutines}

The algorithm {\sf LowRank} uses different subroutines, described as follows.

\noindent
{\sf IsReg}: inputs parameters $A,U$ and {outputs} {true} if
{$A$} satisfies ${\sfG_1}$ and $f(A,U)$ satisfies $\sfG_2$,
{\tt false} otherwise;

\noindent
    {\sf RatPar}: inputs a polynomial system $f$; returns {either the
      empty list if $\zeroset{f}=\emptyset$, or} an error message if $\zeroset{f}$
    is not zero-dimensional, otherwise a rational parametrization of $\zeroset{f}$;

\noindent
{\sf Project}: inputs a rational parametrization of a finite set
$\setZ \subset \CC^N$ and a subset of the variables
$x_1, \ldots, x_N$, and {outputs} a rational parametrization of the
projection of $\setZ$ on the space generated by this subset;

\noindent
{{\sf Lift}: inputs a rational parametrization of a finite set
$\setZ \subset \CC^N$ and a number $\fiber \in \CC$, and {outputs} a
rational parametrization of {the set} $\{(\fiber,x) \in \CC^{N+1} \,
: \, x \in \setZ\}$, {that is the pre-image of $\setZ$ under
the projection on the last $N$ variables};}


\noindent
{\sf Union}: inputs rational parametrizations encoding finite sets
$\setZ_1, \setZ_2$ and {outputs} a rational parametrization of
$\setZ_1 \cup \setZ_2$.

{Moreover, a routine ${\sf nvar}(A)$ outputs the number of variables of
a linear matrix $A$, and a routine ${\sf minors}(A,d)$ outputs the list
of $d \times d$ minors of $A$.}
\subsubsection{The algorithm}




\setlength\fboxrule{1pt}
\setlength\fboxsep{0.3cm}

{This is the formal description of the algorithm. The main routine {\sf LowRank} calls the recursive routine {\sf LowRankRec}; the recursion is on the number of variables of the $m \times s$ linear matrix $A$ (which is always denoted by $n$, and computed by a subroutine ${\sf nvar}$).} \\

\begin{varwidth}{14cm}
  {\bf Algorithm ${\sf LowRank}(A,r)$:}
  \begin{enumerate}
  \item \label{new:main:1} Choose $U \in \QQ^{(s-r) \times \rect{s}}$
  \item \label{new:main:2} If ${\sf IsReg}(A, U) = {\tt false}$ then {return('error: input data not generic')}
  \item \label{new:main:3} return ${\sf LowRankRec}(A,U,r)$
  \end{enumerate}
  {\bf Algorithm ${\sf LowRankRec}(A,U,r)$:}
  \begin{enumerate}
  \item \label{new:rec:4} $n={\sf nvar}(A)$. {If $n \leq (m-r)(s-r)$ then return({\sf RatPar}(${\sf minors}(A,r+1)$))}
  \item \label{new:rec:5} Choose $w \in \QQ^n\setminus \{0\}$, {${\sf P}={\sf Project}({\sf RatPar}(\ell(A, U, w)), {x})$}
  \item \label{new:rec:6} Choose $\fiber \in \QQ$, ${\sf Q}={\sf Lift}({\sf LowRankRec}(\restr{A}{w,t}, U, r), \fiber)$
  \item \label{new:rec:7} return ${\sf Union}({\sf Q}, {\sf P})$.
  \end{enumerate}
\end{varwidth}


{
\begin{example}
  Consider the $4 \times 3$ linear matrix $A = (x_{ij}), i=1 \ldots 4, j=1\dots 3$ with unknown entries. One of the incidence varietes encoding one rank defect in $A$ is given as the zero set of the polynomials $f=(f_1,f_2,f_3,f_4)$ defined by
$$
\left(
\begin{array}{cccc}
f_1 \\
f_2 \\
f_3 \\
f_4
\end{array}
\right)
\coloneqq
\left(
\begin{array}{cccc}
x_{11}y_1 + x_{12}y_2 + x_{13}y_3 \\
x_{21}y_1 + x_{22}y_2 + x_{23}y_3 \\
x_{31}y_1 + x_{32}y_2 + x_{33}y_3 \\
x_{41}y_1 + x_{42}y_2 + x_{43}y_3
\end{array}
\right)
=
\left(
\begin{array}{cccc}
x_{11} & x_{12} & x_{13} \\
x_{21} & x_{22} & x_{23} \\
x_{31} & x_{32} & x_{33} \\
x_{41} & x_{42} & x_{43}
\end{array}
\right)
\left(
\begin{array}{cccc}
y_1 \\
y_2 \\
y_3
\end{array}
\right)
$$
together with an affine equation of the form $u_1y_1 + u_2y_2 + u_3y_3 = 1$, ensuring that a non-zero kernel vector is computed. The Lagrange system $\ell(A, U, w)$ is generated by the critical equations
$$
\begin{array}{cccc}
f_1 = 0, \,\,\,\,\,\, f_2 = 0, \,\,\,\,\,\, f_3 = 0, \,\,\,\,\,\, f_4 = 0, \,\,\,\,\,\, u^Ty - 1 = 0, \,\,\,\,\,\, \sum_{i=1}^4z_i \nabla f_i + z_5 (u,0)^T = (w,0)^T \\
\end{array}
$$
whose solutions are the local minima and maxima of the restriction of a linear function $x \mapsto w^Tx$ to the incidence variety. Once the solutions of this system (that are finitely many, for general choices of $u$ and $w$) are computed (with a rational parametrization) a generic fiber is considered by adding an affine constraint $w^Tx-t=0$, one variable is eliminated and the algorithm is called recursively.
\end{example}
}

\subsection{Correctness}

We start by stating intermediate results which will be used to prove
the correctness of the algorithm.  {The proof of the first
  result below is given in Section \ref{sec:regularity}.}


\begin{proposition}\label{prop:regularity}
  \mynew{Let $m,s,n,r \in \NN$, with $0 \leq r < s \leq m$. The following holds. 
    \begin{enumerate}
    \item There exists a non-empty Zariski open set
      $\zarA \subset (\CC^{m \times \rect{s}})^{n+1}$ such that for
      $A\in \zarA$, $A$ satisfies Property $\sfG_1$.
    \item Let $A$ satisfy $\sfG_1$. There exists a non-empty Zariski open set
      $\zarU_A\subset \CC^{(s-r) \times \rect{s}}$ such that, for
      $U\in \zarU_A$, the following holds:
      \begin{itemize}
      \item[(a)] $f(A, U)$ satisfies Property $\sfG_2$;
      \item[(b)] letting ${\cal N}_r(A, U)$ be the Zariski closure of
        $\setD_r-\Pi_X(\setV_r(A, U))$ and $Z$ an irreducible component of $\setD_p$
        for $0\leq p\leq r$, $Z\cap {\cal N}_r(A, U)$ has co-dimension
        at least $1$ in $Z$.
      \end{itemize}
      \item
        Let $w \in \QQ^n\setminus \{0\}$. There exists a non-empty Zariski open set
        $\zarfiber_A \subset \CC$ such that if $t \in \zarfiber_A \cap \QQ$, then
        $\restr{A}{w,t}$ satisfies Property $\sfG_1$.
   \end{enumerate}}
\end{proposition}

{The second result is proved in Section \ref{sec:dimension}.}

\begin{proposition} \label{prop:dimension} {Let $A$ be in the
    non-empty Zariski open set
    $\zarA\subset (\CC^{m \times \rect{s}})^{n+1}$ and $U$ in the
    non-empty Zariski open set
    $\zarU_A\subset \CC^{(s-r) \times \rect{s}}$ defined in
    Proposition~\ref{prop:regularity}.
    There exists a non-empty Zariski open set
    $\zarW_{A,U} \subset \CC^n$ such that for
    $w\in \zarW_{A,U} \cap \QQ^n$ the following holds.
    
    \begin{enumerate}
    \item $\setZ(A, U, w)$ is finite and $\ell(A, U, w)$ satisfies
      Property $\sfG_2$;
    \item the projection of $\setZ(A, U, w)$ on $(x,y)$ contains the
      set of critical points of the restriction of
      $\Pi_w: (x,y) \to w^Tx$ to $\setV_r$. 
\end{enumerate}}
\end{proposition}


The following proposition will be proved in Section~\ref{sec:closure}. 
\begin{proposition}\label{prop:closure}
  \mynew{Let $\zarA \subset (\CC^{m \times \rect{s}})^{n+1}$ and $\zarU_A\subset \CC^{(s-r) \times \rect{s}}$
    be the non-empty Zariski open sets, and let ${\cal N}_r(A, U) \subset \setD_r$ be the
    Zariski closed set, defined in Proposition~\ref{prop:regularity}.
    Let $A \in \zarA$ and $U \in \zarU_A$.
    Let $\cc\subset \RR^n$ be a connected component
    of $\setD_r\cap \RR^n$.

  There exists a non-empty Zariski open set
  $\zarW'_{A,U} \subset \CC^n$ such that, for
  $w\in \zarW'_{A,U}\cap \QQ^{n}$, the following holds:
  \begin{enumerate}
  \item $\pi_w(\cc)$ is closed 
  \item for $\fiber \in \RR$ in the boundary of $\pi_w(\cc)$, there
    exists $(x,y,z)$ in $\setZ(A , U, w)$ such that $\Pi_w(x,y)=t$
    and $x \notin {\cal N}_r(A, U)$.
\end{enumerate}
}
\end{proposition}



{Observe that in the above statements, the defined non-empty Zariski
open sets (except for the set $\zarA$)
have subscripts indicating which data they depend on. Hence,
starting with $A$ satisfying $\sfG_1$, we highlight the following facts:
\begin{itemize}
\item the non-empty Zariski open set $\zarU_A$ (Proposition~\ref{prop:regularity})
  depends on $A$; 
\item the non-empty Zariski open sets $\zarW_{A,U}$ (Proposition~\ref{prop:dimension})
  and $\zarW'_{A,U}$ (Proposition~\ref{prop:closure}) depend on $A$ and $U$. 
\end{itemize}
}


{\bf Hypothesis $\sfH_1$}.  {In the sequel, $A$ (resp. $U$) is
  assumed to belong to the non-empty Zariski open set $\zarA$
  (resp. $\zarU_A$) defined in Proposition~\ref{prop:regularity}.}


{One also has to ensure that the parameters $w \in \QQ^n$ and $t \in \QQ$ chosen,
respectively, at steps \ref{new:rec:5} and \ref{new:rec:6} belong to the non-empty Zariski
open sets defined in Proposition \ref{prop:dimension} and \ref{prop:closure} at each call
of {\sf LowRankRec}. The choices of random parameters can be stored in an array
\begin{equation} \label{propH}
\left(
(w^{(n)}, t^{(n)}), \ldots, (w^{((m-r)\rect{(s-r)})}, t^{((m-r)\rect{(s-r)})})
\right)
\end{equation}
where the superscript represents the number of variables at the given recursion step
(hence $n$ here is the number of variables of $A$ at the input of {\sf LowRank}).
We also denote by $\zarfiber^{(j)}, \zarW^{(j)}$ and $\zarW^{(j)'}$ the non-empty Zariski
open sets defined by Propositions \ref{prop:regularity}, \ref{prop:dimension} and
\ref{prop:closure}, at the $(n-j+1)-$th recursion call (here we avoid the dependency
on the Zariski open sets).
 
{\bf Hypothesis $\sfH_2$}.
Given $A$ and $U$ satisfying $\sfH_1$, {the parameters} \eqref{propH} satisfy:
\begin{itemize}
\item $w^{(j)} \in \zarW^{(j)} \cap \zarW^{(j)'} \cap \QQ^n\setminus \{0\}$ for $j=(m-r)\rect{(s-r)}, \ldots, n$;
\item $\fiber^{(j)} \in \zarfiber^{(j)} \cap \QQ$ for $j=(m-r)\rect{(s-r)}, \ldots, n$.
\end{itemize}
}

\begin{theorem} \label{theo:correctness}
If $\sfH_1$ and $\sfH_2$ hold, algorithm {\sf LowRank}
returns a rational parametrization whose set of solutions intersects each
connected component of $\setD_r \cap \RR^n$.
\end{theorem}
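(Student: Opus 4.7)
The plan is to establish correctness by induction on the number $n$ of variables (with $m$ and $r$ fixed), matching the three branches of the recursive subroutine {\sf LowRankRec}. Under $\sfH_1$ the input already satisfies Property $\sfG$ by Proposition \ref{prop:regularity}(1), and under $\sfH_2$ each random draw made during the recursion falls in the prescribed non-empty Zariski open sets of Propositions \ref{prop:regularity}, \ref{prop:dimension} and \ref{prop:closure}, so each of these propositions will be applicable at every level.

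For the two base cases I will rely on the fact that under $\sfG$ the set $\setD_r$ is either empty or has codimension exactly $(m-r)^2$. If $n < (m-r)^2$ this forces $\setD_r = \emptyset$, so the empty parametrization returned at step \ref{step:0} is correct. If $n = (m-r)^2$, then $\setD_r$ is finite; since $U$ is full-rank and $S$ invertible, the projection $\setV_r(A,U,S) \to \setD_r$ is one-to-one on the locus where $A(x)$ has rank exactly $r$, and under $\sfG$ every point of $\setD_r$ lies on this locus because $\setD_{r-1}$ is forced to be empty by its strictly larger expected codimension $(m-r+1)^2 > n$. Hence {\sf Project}({\sf RatPar}($f$), $(x_1, \ldots, x_n)$) encodes $\setD_r$.

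For the inductive step I fix a connected component $\cc \subset \setD_r \cap \RR^n$ and aim at exhibiting a point of $\cc$ in the output of {\sf Union}. Applying the change of coordinates $M = M^{(n)}$, the image $M^{-1}\cc$ is a connected component of $\setD_r(A \circ M) \cap \RR^n$. By $\sfH_2$ we have $M \in \zarM_2^{(n)}$, and Proposition \ref{prop:closure}(1) guarantees that $\pi_1(M^{-1}\cc) \subset \RR$ is closed and connected, hence a (possibly unbounded) closed interval $I$. I then split on whether $\fiber = \fiber^{(n)} \in I$. If $\fiber \in I$, then $M^{-1}\cc \cap \{x_1 = \fiber\}$ is a non-empty closed subset of $\setD_r(\tilde A) \cap \RR^{n-1}$, where $\tilde A = (A_0 + \fiber A_1) + x_2 A_2 + \cdots + x_n A_n$ satisfies $\sfG$ by Proposition \ref{prop:regularity}(2); the induction hypothesis applied to $(\tilde A, r)$ then yields a parametrization meeting every connected component of $\setD_r(\tilde A) \cap \RR^{n-1}$, in particular every connected component of $M^{-1}\cc \cap \{x_1 = \fiber\}$, so after {\sf Lift} and the implicit pull-back by $M$ a point of $\cc$ appears in $\sf Q$. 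If instead $\fiber \notin I$, then $I$ admits a real boundary point $\fiber^\star$, and Proposition \ref{prop:closure}(2) ensures that $\pi_1^{-1}(\fiber^\star) \cap M^{-1}\cc$ is finite and contains a point lifting to a critical point $(x,y) \in \setV_r(A \circ M, U, S)$ of the restriction of $\pi_1$; by Proposition \ref{prop:dimension}(2) this critical point belongs to the $(x,y)$-projection of the zero-dimensional set $\setZ(A \circ M, U, S, v)$ computed at step \ref{step:3}, so after {\sf Project} and {\sf Image} under $M^{-1}$ a point of $\cc$ appears in $\sf P$. In either case the output of {\sf Union} contains a point of $\cc$, closing the induction.

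The main obstacle is the exhaustiveness of the case split on $\fiber \in I$ versus $\fiber \notin I$: the closedness of $\pi_1(M^{-1}\cc)$ (guaranteed in generic coordinates by Proposition \ref{prop:closure}(1)) is essential since it provides a real boundary value to exploit in the second case, while the hereditary nature of $\sfG$ under the substitution $x_1 = \fiber$ (Proposition \ref{prop:regularity}(2)) is what keeps the recursive call in the first case within the scope of the induction hypothesis. The propagation of $\sfH_2$ through the recursion is routine bookkeeping: only finitely many Zariski open sets are accumulated along the way, so a single generic choice of the parameters stored in the array \eqref{propH} lies in their intersection.
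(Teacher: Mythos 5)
Your overall strategy is the paper's: induction on $n$, the two base cases settled by the codimension condition in $\sfG$, and for $n>(m-r)^2$ a dichotomy between a fiber of $\pi_1$ that meets $\pi_1(M^{-1}\cc)$ (handled by the recursive call on the slice $x_1=\fiber$, using heredity of $\sfG$ from Proposition \ref{prop:regularity}(2)) and a component whose projection misses $\fiber$ (handled through critical points). Your split on $\fiber\in I$ versus $\fiber\notin I$ is a harmless variant of the paper's split on $\pi_1(M^{-1}\cc)=\RR$ versus $\pi_1(M^{-1}\cc)\neq\RR$, and the first branch and both base cases are treated at essentially the same level of detail as the paper.

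There is, however, a gap in the second branch. You invoke Proposition \ref{prop:closure}(2) to produce ``a point lifting to a critical point $(x,y)\in\setV_r(A\circ M,U,S)$ of the restriction of $\pi_1$''. That proposition only asserts the existence of some $(x,y)\in\setV_r(A\circ M,U,S)$ with $\pi_1(x,y)=\fiber^\star$; it says nothing about criticality, and without criticality Proposition \ref{prop:dimension}(2) cannot be applied to produce the Lagrange multipliers $z$ placing $(x,y,z)$ in $\setZ(A\circ M,U,S,v)$. The missing argument, which occupies the last third of the paper's proof, is twofold. First, one must show that $\fiber^\star$ is a boundary value of $\pi_1$ restricted to the connected component $\cc'$ of $\setV_r\cap\RR^{n+m(m-r)}$ containing $(x,y)$: the paper does this by taking a semi-algebraic path in $M^{-1}\cc'$ from $(x,y)$ to a hypothetical point with smaller first coordinate, observing that its $x$-projection stays inside $M^{-1}\cc$, and contradicting the minimality of $\fiber^\star$ on $M^{-1}\cc$. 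Second, one uses the smoothness and equidimensionality of $\setV_r$ (Property $\sfG_2$, part of $\sfG$) together with the Implicit Function Theorem: were $(x,y)$ a regular point of $\pi_1|_{\setV_r}$, the image of $M^{-1}\cc'$ under $\pi_1$ would contain a neighbourhood of $\fiber^\star$, contradicting that it is a boundary value. Only after these two steps does Proposition \ref{prop:dimension}(2) apply and put a point of $\cc$ among the solutions of ${\sf P}$. As written, your proposal treats ``lies over a boundary value of $\pi_1(M^{-1}\cc)$'' as synonymous with ``is a critical point of $\pi_1|_{\setV_r}$'', which is precisely the implication that requires proof.
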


\begin{proof}
{Suppose first that {$n \leq (m-r)\rect{(s-r)}$}. Since $\sfH_1$ holds, then the
  variety $\setD_r$ is empty {or finite}. Hence the algorithm returns the
  correct output, {that is either the empty list or a rational parametrization
    of the finite set $\setV_r$}.} Thereafter, we proceed by induction on $n$.

Let $n>(m-r)\rect{(s-r)}$ and suppose that for any $(n-1)-$variate linear matrix,
algorithm {\sf LowRank} returns the expected output when $\sfH_1$ and $\sfH_2$ hold,
{namely one point per connected component of $\setD_r \cap \RR^n$}.
Let $A$ be a $n-$variate $\rect{m \times s}$ linear matrix, let $r$ be an integer
such that $0 \leq r \leq s-1$ and let $\cc$ be a connected component of $\setD_r
\cap \RR^n$. Let $U$ be the matrix chosen at Step \ref{new:main:1} of {\sf LowRank}.
{Let $w \in \QQ^n$ be the vector chosen at Step \ref{new:rec:5} of {\sf LowRankRec}
  with input $A,U,$ and $r$. Consider the projection $\pi_w \colon (x_1, \ldots, x_n)
  \to w^Tx$ restricted to $\setV_r(A, U)$.} Since Property $\sfH_1,\sfH_2$ hold, by
Proposition \ref{prop:closure}, {$\pi_w(\cc)$ is closed, and so either $\pi_w(\cc)
  = \RR$ or $\pi_w(\cc) \subsetneq \RR$} is a closed set with non-empty boundary.
We claim that, in both cases, {\sf LowRank} with input $(A,r)$ returns a point which
lies in the connected component $\cc$. This is proved next.

{\it First case.} Suppose first that {$\pi_w(\cc) = \RR$}. In
particular, for $\fiber \in \QQ$ chosen at Step \ref{new:rec:6} of {\sf LowRankRec}
with input $A,U,r$, the set {$\pi_w^{-1}(\fiber)$} intersects
$\cc$, so {$\pi_w^{-1}(\fiber) \cap \cc \neq \emptyset$}. Let {$\restr{A}{w,t}$}
be the $(n-1)-$variate $\rect{m \times s}$ linear matrix obtained from $A$ by
{substituting $x_1 = (1/w_1)(t-\sum_{j=2}^{n}w_jx_j)$ obtained from the linear
  constraint $w^Tx=t$ (indeed, since $w \neq 0$,
one can suppose $w_1 \neq 0$ up to permutation of indices).} Remark that $\pi_w^{-1}(t)
\cap \cc$ is the union of some connected components of the determinantal variety
$\setD^{(n-1)}_r \cap \RR^{n-1} = \{x \in \RR^{n-1} : \rank \, \restr{A}{w,t} \leq r\}$.
Since $\sfH_1$ holds, then $\restr{A}{w,t}$ satisfies ${\sfG_1}$; we deduce
by the induction hypothesis (since $\restr{A}{w,t}$ is $(n-1)-$variate) that
the subroutine {\sf LowRankRec} computes one point in each connected component of
$\setD^{(n-1)}_r \cap \RR^{n-1}$, and so at least one point in $\cc$.

{\it Second case.} Suppose now that $\pi_w(\cc) \neq \RR$. By Proposition \ref{prop:closure},
$\pi_w(\cc)$ is closed. Since $\cc$ is connected, $\pi_w(\cc)$ is a closed interval, and since
$\pi_w(\cc) \neq \RR$ there exists $\fiber$ in the boundary of $\pi_w(\cc)$ such that
$\pi_w(\cc) \subset [\fiber, +\infty)$ or $\pi_w(\cc) \subset (-\infty, \fiber]$. Suppose
without loss of generality that $\pi_w(\cc) \subset [\fiber, +\infty)$, so that $t$ is the
  minimum value attained by $\pi_w$ on $\cc$.

  By Proposition \ref{prop:closure}, there exist
  {$x_2, \ldots, x_n$ in $\RR$, $y \in \CC^{\rect{s(s-r)}}$ and
    $z \in \CC^{c}$} such that, for $x=(\fiber,x_2, \ldots, x_n)$, it
  holds that $(x,y,{z}) \in {\setZ(A,U,w)}$, {and
    $x \not\in{\cal N}_r(A, U)$}.
Then, we conclude that the point $x \in \cc$ appears among the solutions of the
rational parametrization {\sf P} obtained at Step \ref{new:rec:5} of {\sf LowRankRec}.
\end{proof}

{Both cases considered in the above proof are important to be considered. For
instance, the projection of the set defined by the rank defect of the matrix
$\begin{bmatrix}x_1+x_2 & 1\\ 1& x_1-x_2\end{bmatrix}$ is the whole line. The
second case is well illustrated with the set of points at which the matrix
$\begin{bmatrix}x_1 & 1-x_2\\ x_2& x_1\end{bmatrix}$ is rank defective: since it
is compact over $\RR^2$, its projection on a line cannot be surjective.}


\subsection{Complexity analysis}
\label{ssec:complexity-analysis}

In this section we provide an analysis of the complexity of algorithm
{\sf LowRank}. We also give bounds for the maximum number of complex solutions
computed by {\sf LowRank}. We suppose that $A$ satisfies Property $\sfG_1$
and that $f(A,U)$ satisfies Property $\sfG_2$. {Recall that the complexity of
  checking these properties is not evaluated here.}


In order to bound the complexity of {\sf LowRank}, it is essentially
sufficient to bound the complexity of {\sf LowRankRec}. This latter
quantity mainly depends on the subroutine {\sf RatPar} computing the
rational parametrization, whose complexity is computed in Section
\ref{ssec:compl:ratpar}. We rely on routines described in
\cite{SaSc17}, which consists in a symbolic homotopy
algorithm taking advantage of the sparsity structure of the input
polynomial system.


Finally, complexity bounds for the subroutines {\sf Project, Lift, Image}
and {\sf Union} are provided in Section \ref{ssec:compl:subrout}
and refer to results of \cite{PS13}.

\subsubsection{Bounds on the degree of the output of {\sf RatPar}} \label{ssec:compl:mbb}

We consider the subroutine {\sf RatPar} at the first recursion
step of {\sf LowRank}. Its input consists in either the generators $f(A \circ M, U, S)$
of the incidence variety (if $n=(m-r)\rect{(s-r)}$) or the
Lagrange system $\ell(A \circ M, U, S, v)$ (if $n>(m-r)\rect{(s-r)}$).
In both cases, we provide below in Proposition \ref{compl:prop:degree:ratpar}
a bound on the degree of the rational parametrization returned by {\sf RatPar}.

We recall that if $x^{(1)}, \ldots, x^{(p)}$ are $p$ groups of
variables, and $f \in \QQ[x^{(1)}, \ldots, x^{(p)}]$, we say that the
multidegree of $f$ is $(d_1, \ldots, d_p)$ if its degree with respect
to the group of variables $x^{(j)}$ is $d_j$ for $j=1, \ldots, p$.

\begin{proposition} \label{compl:prop:degree:ratpar}
Let $A$ be a $n-$variate $m \times \rect{s}$ linear matrix,
$0 \leq r < s \leq m$ and let $U$ and $w$ be respectively
the parameters chosen at \mynew{step \ref{new:main:1}} of
{\sf LowRank} and at \mynew{step \ref{new:rec:5}} of {\sf LowRankRec}. Suppose that
$\sfH_1$ and $\sfH_2$ hold. Then:
\begin{enumerate}
\item if $n=(m-r)\rect{(s-r)}$, the degree of the output of {\sf RatPar}
  \mynew{at step \ref{new:rec:4}}, is bounded from above by
  $\rect{\binom{m(s-r)}{(m-r)(s-r)}}$;
\item if $n>(m-r)\rect{(s-r)}$, the degree of the output of {\sf RatPar}
  \mynew{at step \ref{new:rec:5}},
  with input $\ell(A,U,w)$, is bounded from above by
  \[
  \delta(m,s,n,r) \coloneqq \sum_{k \in \mathcal{F}_{m,s,n,r}}\binom{\rect{m(s-r)}}{n-k}\binom{n-1}{\rect{k+(m-r)(s-r)-1}}\binom{\rect{r(s-r)}}{k},
  \]
  \rect{with $\mathcal{F}_{m,s,n,r}=\{k : \max\{0,n-m(s-r)\} \leq k \leq \min\{n-(m-r)(s-r),r(s-r)\}\}$.}
\end{enumerate}
\end{proposition}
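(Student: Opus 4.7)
The degree of the rational parametrization output by {\sf RatPar} on a zero-dimensional input equals the cardinality of its zero set, so in both cases it suffices to upper-bound the number of isolated solutions. I will use the multihomogeneous B\'ezout theorem: for a square system in $p$ blocks of variables of sizes $n_1,\ldots,n_p$ with equations of multidegrees $(d_{i,1},\ldots,d_{i,p})$, this number is at most the coefficient of $\prod_j \zeta_j^{n_j}$ in $\prod_i (\sum_j d_{i,j}\zeta_j)$. For Part (1), use the two-block partition $(x,y)$ with $|x|=(m-r)^2$ and $|y|=m(m-r)$: the system $f(A\circ M,U,S)$ is square, its $m(m-r)$ components from $A(x)Y(y)=0$ have bidegree $(1,1)$ and its $(m-r)^2$ components from $UY(y)-S=0$ have bidegree $(0,1)$. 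Extracting the coefficient of $\zeta_x^{(m-r)^2}\zeta_y^{m(m-r)}$ in $(\zeta_x+\zeta_y)^{m(m-r)}\,\zeta_y^{(m-r)^2}$ yields $\binom{m(m-r)}{(m-r)^2}$.

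For Part (2), applying multihomogeneous B\'ezout directly to $\ell(A\circ M,U,S,v)$ in the three-block partition $(x,y,z)$ overestimates, so I first replace $\ell$ by a reduced Lagrange system with the same cardinality of isolated zeros. Use the $(m-r)^2$ linear equations $UY-S=0$ to express $(m-r)^2$ entries of $Y$ as affine functions of the remaining $r(m-r)$ free entries $y_2$, and substitute into $A(x)Y=0$. The resulting reduced incidence system in $(x,y_2)$ has the same zero set as $\setV_r$, and the associated Lagrange system for critical points of $\pi_1$ restricted to this reduced incidence variety uses only the $m(m-r)$ multipliers $z$ attached to $A(x)Y(y_2)=0$. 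Under the genericity supplied by Propositions~\ref{prop:regularity} and~\ref{prop:dimension}, both the zero set $\setZ$ of $\ell$ and the zero set of the reduced Lagrange system project bijectively onto the finite critical locus $\crit(\pi_1,\setV_r)$: away from $\crit(\pi_1,\setV_r)$ the matrix $\jac_1 f$ has full row rank and the normalization $v'z=1$ has no solution, while at each critical point the one-dimensional left kernel of $\jac_1 f$ is pinned down uniquely by $v'z=1$ for generic $v$. Hence the two zero sets share the same cardinality.

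The reduced system is square in the three-block variables $(x,y_2,z)$ of sizes $(n,r(m-r),m(m-r))$ and consists of $m(m-r)$ equations of multidegree $(1,1,0)$ (from $A(x)Y(y_2)=0$), $n-1$ of multidegree $(0,1,1)$ (the Lagrange equations attached to the $x_2,\ldots,x_n$ columns of $\jac_1 f$), $r(m-r)$ of multidegree $(1,0,1)$ (the Lagrange equations attached to the $y_2$ columns), and one equation $v'z-1=0$ of multidegree $(0,0,1)$. The multihomogeneous B\'ezout bound is therefore the coefficient of $\zeta_x^n\zeta_{y_2}^{r(m-r)}\zeta_z^{m(m-r)}$ in
\[
(\zeta_x+\zeta_{y_2})^{m(m-r)}\,(\zeta_{y_2}+\zeta_z)^{n-1}\,(\zeta_x+\zeta_z)^{r(m-r)}\,\zeta_z.
\]
Expanding each of the three binomial factors and matching exponents leaves a single integer degree of freedom $k$ (the exponent of $\zeta_{y_2}$ drawn from the middle factor), yielding a sum of three binomials indexed by $k$; after the re-indexing $k\mapsto n-(m-r)^2-k$ and the use of $\binom{n-1}{c}=\binom{n-1}{n-1-c}$, this sum coincides with the closed form in the statement, with $\mathcal{F}_{m,n,r}$ precisely the range over which all three binomial coefficients are positive.

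The most delicate step, and the one I would want to verify most carefully, is the cardinality-preserving reduction underlying Part (2): concretely, that the $(m-r)^2$ multipliers originally attached to $UY-S=0$ are correctly absorbed by the linear elimination and that no isolated solutions are created or lost. This rests on the full-rank assumption on $U$ together with the smoothness of $\setV_r$ supplied by Proposition~\ref{prop:regularity} and the uniqueness of the multiplier determination for generic $v$ from Proposition~\ref{prop:dimension}. Once this bijection is in hand, the remaining extraction of the binomial coefficient is routine.
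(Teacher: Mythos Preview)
Your argument is correct and follows essentially the same route as the paper. Both proofs reduce to the multihomogeneous B\'ezout bound (the paper cites \cite[Proposition~11.1.1]{SaSc13}) applied to a reduced system, and both extract the same coefficient from the same product of binomial forms. The only cosmetic differences are: in Part~(1) you keep the full $y$-block of size $m(m-r)$ and include the $(m-r)^2$ linear equations $UY-S=0$ directly with bidegree $(0,1)$, whereas the paper first eliminates $(m-r)^2$ of the $y$-variables and works with a $y$-block of size $r(m-r)$; in Part~(2) you retain all $m(m-r)$ multipliers and append the normalization $v'z-1=0$ of multidegree $(0,0,1)$, whereas the paper dehomogenizes by setting $z_1=1$ and drops one multiplier variable. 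Either choice leads to the identical coefficient extraction (your extra $\zeta_z$ factor exactly compensates the extra $z$-variable), so the bounds coincide.

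Your discussion of the cardinality-preserving reduction is actually more explicit than the paper's, which simply asserts that the reduced Lagrange system $(f,g,h)$ inherits $\sfG_2$ and finiteness from Proposition~\ref{prop:dimension} without spelling out the bijection. Your argument via the one-dimensional left kernel of $\jac_1 f$ at each critical point, pinned down by generic $v$, is the right picture; note that finiteness of $\setZ$ (Proposition~\ref{prop:dimension}) is what forces the kernel dimension to be exactly one, so your invocation of that proposition is doing the essential work here.
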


\begin{proof}[of Assertion 1]
  If $n=(m-r)\rect{(s-r)}$, since $\sfH_1$ holds, the dimension of
  $\setD_r$ is zero. Consequently, the degree of the rational
  parametrization returned by {\sf RatPar} is the degree of
  $\setD_r$. \mynew{We bound this degree by the degree of $\setV_r$,
    which is a finite set by Proposition \ref{prop:regularity}
    (indeed, in the zero-dimensional case, the set
    ${\mathcal N}_r(A,U)$ is empty).}  Since the entries of $f(A, U)$
  have a natural bilinear structure in $\X, \Y$, one takes advantage
  in using the Multilinear B\'ezout bound (see \cite[Chapter 11]{SaSc13}) to bound the degree of the set it
  defines.

From $U Y(\y)-\Id_{s-r}$ one can eliminate $\rect{(s-r)^2}$ variables $\y_{i,j}$.
{Indeed, recall that $U$ has full rank $\rect{s-r}$. Without loss of generality,
we suppose that the last $\rect{s-r}$
columns of $U$ are linearly independent, and hence we eliminate the
variables $y_{i,j}$ corresponding to the last $\rect{s-r}$ rows of $Y(\y)$}.
Abusing notation, we denote by the same symbol $f \subset
\QQ[\x, \y_{1,1}, \ldots, \y_{\rect{r,s-r}}]$ the polynomial system obtained after
this elimination. It is constituted by $m(\rect{s-r})$ polynomials of
multidegree bounded by $(1,1)$ with respect to $\x=(\x_1, \ldots, \x_n)$
and $\y=(\y_{1,1}, \ldots, \y_{r,s-r})$. 

By the Multilinear B\'ezout theorem \cite[Prop. I.1]{SaSc13}, $\deg\zeroset{f}$ is
bounded by the sum of the coefficients of
\[
(s_x+s_y)^{m(\rect{s-r})} \,\,\, \mod \, \left\langle s_x^{n+1}, s_y^{\rect{r(s-r)+1}}\right\rangle \subset \ZZ[s_x,s_y].
\]
Since $n+\rect{r(s-r)}=m(s-r)$, and $(s_x+s_y)^{\rect{m(s-r)}}$ is homogeneous of degree $\rect{m(s-r)}$, the
aforementioned bound equals the coefficient of $s_x^{n}s_y^{\rect{r(s-r)}}$ in the expansion of $(s_x+s_y)^{m(s-r)}$,
that is exactly $\rect{\binom{m(s-r)}{(m-r)(s-r)}}$.
\end{proof}

\begin{proof}[of Assertion 2]
In this case, the input of {\sf RatPar} is the Lagrange system
$\ell(A, U, w)$. Let $f$ be the equivalent system defined
in the proof of Assertion 1. We apply a similar reduction to
$\ell(A, U, w)$. We introduce Lagrange multipliers
$z=[1, z_2, \ldots, z_{\rect{m(s-r)}}]$ (we put $z_1=1$ w.l.o.g.,
since $\ell(A, U, w)$ is defined over the Zariski
open set $z \neq 0$) and we consider polynomials $(g,h)=z^T\jac_1 f$.
Hence the new equivalent system $\ell = (f,g,h)$ is constituted by:
\begin{itemize}
\item $\rect{m(s-r)}$ polynomials of multidegree bounded by $(1,1,0)$;
\item $n-1$ polynomials of multidegree bounded by $(0,1,1)$;
\item $\rect{r(s-r)}$ polynomials of multidegree bounded by $(1,0,1)$.
\end{itemize}
Moreover, by Proposition \ref{prop:dimension}, $\zeroset{f,g,h}$
has dimension at most zero and $(f,g,h)$ satisfies $\sfG_2$.
As above, $\deg \zeroset{f,g,h}$ is bounded by the sum of the coefficients of
\[
(s_x+s_y)^{\rect{m(s-r)}}(s_y+s_z)^{n-1}(s_x+s_z)^{\rect{r(s-r)}} \,\, \mod \, \left\langle
s_x^{n+1}, s_y^{\rect{r(s-r)}+1}, s_z^{\rect{m(s-r)}}\right\rangle \subset \ZZ[s_x,s_y,s_z].
\]
As in the proof of Assertion 1, by homogeneity of the polynomial
and by counting the degrees, the previous sum is given by the
coefficient of the monomial $s_x^{n}s_y^{\rect{r(s-r)}}s_z^{\rect{m(s-r)-1}}$ in
the expansion
\[
\sum_{i=0}^{\rect{m(s-r)}}\sum_{j=0}^{n-1}\sum_{k=0}^{\rect{r(s-r)}}\binom{\rect{m(s-r)}}{i}
\binom{n-1}{j}\binom{\rect{r(s-r)}}{k}s_x^{i+k}s_y^{\rect{m(s-r)}-i+j}s_z^{n-1-j+\rect{r(s-r)}-k}.
\]
The coefficient is obtained by setting the equalities
$i+k=n, \, \rect{m(s-r)}-i+j=\rect{r(s-r)}$ and $n-1-j+\rect{r(s-r)}-k=\rect{m(s-r)}-1$.
These equalities imply $i+k=n=j+k+\rect{(m-r)(s-r)}=j+k+i-j=i+k$ and
consequently one deduces the claimed expression.
\end{proof}


Proposition \ref{compl:prop:degree:ratpar} implies straightforwardly the following
estimate.

\begin{corollary}
Suppose that the hypotheses of Proposition \ref{compl:prop:degree:ratpar}
are satisfied. Then {\sf LowRank} returns a rational parametrization
whose degree is less than or equal to
\[
\rect{\binom{m(s-r)}{(m-r)(s-r)}+\sum_{j=(m-r)(s-r)+1}^{\min\{n,(m+r)(s-r)\}}\delta(m,j,r).}
\]
\end{corollary}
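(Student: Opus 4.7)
The plan is to unroll the recursion of \textsf{LowRankRec} and sum the degree contributions level by level, then observe that the sum truncates at $m^2-r^2$ because $\delta(m,j,r)$ vanishes beyond that index. First I would record that the auxiliary subroutines \textsf{Project}, \textsf{Image} and \textsf{Lift}, acting on a rational parametrization encoding a finite set $\setZ$, return parametrizations encoding an image/preimage of $\setZ$ under a bijective or coordinate map, and therefore do not increase the degree; and that \textsf{Union} of parametrizations of finite sets $\setZ_1, \setZ_2$ returns a parametrization of $\setZ_1 \cup \setZ_2$ of degree at most $\deg\setZ_1 + \deg\setZ_2$.

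With these observations, a straightforward induction on $n-(m-r)^2$ shows that the degree of the output of \textsf{LowRankRec}$(A,r)$ is bounded by the sum of the degrees of the rational parametrizations produced by the intermediate calls to \textsf{RatPar}, one at each recursion level $j \in \{(m-r)^2, (m-r)^2+1, \ldots, n\}$. At the base case $j=(m-r)^2$ (Step \ref{step:1}), \textsf{RatPar} receives $f(A \circ M, U, S)$ and outputs a parametrization of degree at most $\binom{m(m-r)}{(m-r)^2}$ by Assertion 1 of Proposition \ref{compl:prop:degree:ratpar}. At each higher level $j$ with $(m-r)^2 < j \leq n$ (Step \ref{step:3}), \textsf{RatPar} receives the Lagrange system $\ell(A \circ M, U, S, v)$ and outputs a parametrization of degree at most $\delta(m,j,r)$ by Assertion 2. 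Aggregating these contributions yields the bound
\[
\binom{m(m-r)}{(m-r)^2} + \sum_{j=(m-r)^2+1}^{n} \delta(m,j,r).
\]

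Finally I would truncate the summation. Whenever $j > m^2-r^2 = (m-r)(m+r)$ one has $j-m(m-r) > r(m-r)$, hence $\max\{0, j-m(m-r)\} > \min\{j-(m-r)^2, r(m-r)\}$, so the index set $\mathcal{F}_{m,j,r}$ in the definition of $\delta(m,j,r)$ is empty and the corresponding term vanishes as an empty sum. This replaces the upper summation limit $n$ by $\min\{n, m^2-r^2\}$, giving the claimed estimate. There is no substantive obstacle: the corollary is pure bookkeeping on top of Proposition \ref{compl:prop:degree:ratpar}, and the only nontrivial check is the vanishing of $\delta(m,j,r)$ for $j>m^2-r^2$, which is immediate from the defining inequalities of $\mathcal{F}_{m,j,r}$.
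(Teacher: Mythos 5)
Your proposal is correct and follows essentially the same route as the paper: sum the \textsf{RatPar} degree bounds from Proposition \ref{compl:prop:degree:ratpar} over the recursion levels and truncate using the observation that $\mathcal{F}_{m,j,r}=\emptyset$ (hence $\delta(m,j,r)=0$) for $j>m^2-r^2$. The only difference is that you spell out the bookkeeping the paper leaves implicit (that \textsf{Project}, \textsf{Image}, \textsf{Lift} preserve degree and \textsf{Union} adds degrees), and your verification that $j>m^2-r^2$ forces $j-m(m-r)>r(m-r)$ is exactly the computation behind the paper's one-line remark.
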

\begin{proof}
Since $\sfH_1$ holds, for $n<(m-r)\rect{(s-r)}$ the algorithm returns the empty
list.
For $m,s,j,r$ let $\mathcal{F}_{m,s,j,r}$ be the set of indices defined in
Proposition \ref{compl:prop:degree:ratpar}. Observe that 
$\mathcal{F}_{m,s,j,r} = \emptyset$ if and only if $j > \rect{(m+r)(s-r)}$.
Hence, the thesis is deduced straightforwardly from bounds given in
Proposition \ref{compl:prop:degree:ratpar}.
\end{proof}

One can also deduce the following bound on $\delta(m, \rect{s}, n, r)$.

\begin{lemma}
For all $m,\rect{s}, n,r$, with $r < s \leq m$, $\delta(m,\rect{s},n,r) \leq \binom{n+m(\rect{s}-r)}{n}^3$.
\end{lemma}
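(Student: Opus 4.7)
The plan is to read off $\delta(m,n,r)$ from Proposition \ref{compl:prop:degree:ratpar} and reduce the sum of three-binomial products to a single two-factor product that can then be attacked by elementary binomial identities. The trivial cases $n<(m-r)^2$ or $n>m^2-r^2$, for which the index set $\mathcal{F}_{m,n,r}$ is empty and $\delta=0$, need no argument; the case $n=(m-r)^2$, for which $\delta=\binom{m(m-r)}{(m-r)^2}$ by assertion~1 of Proposition \ref{compl:prop:degree:ratpar}, is immediate since the right-hand side is itself already $\leq \binom{n+m(m-r)}{n}$. I therefore focus on the generic sum of assertion~2.

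First, I would isolate the middle binomial factor. Because $\binom{n-1}{j}$ is maximized at $j=\lfloor(n-1)/2\rfloor$, we may pull $\binom{n-1}{\lfloor(n-1)/2\rfloor}$ out of the sum as an upper bound for $\binom{n-1}{k+(m-r)^2-1}$, extend the remaining sum to all integers $k$ (all terms being nonnegative), and apply Vandermonde's identity $\sum_k\binom{m(m-r)}{n-k}\binom{r(m-r)}{k}=\binom{m^2-r^2}{n}$. This yields
\[
\delta(m,n,r)\;\leq\;\binom{n-1}{\lfloor(n-1)/2\rfloor}\cdot\binom{m^2-r^2}{n}.
\]

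To conclude, I would prove the two inequalities $\binom{n-1}{\lfloor(n-1)/2\rfloor}\leq\binom{n+m(m-r)}{n}$ and $\binom{m^2-r^2}{n}\leq\binom{n+m(m-r)}{n}^2$. For the first, the elementary monotonicity $\binom{a}{b}\leq\binom{a+c}{b+c}$ (valid for $a\geq b\geq 0$, $c\geq 0$), applied with $c=\lceil(n+1)/2\rceil$, upgrades the left-hand side to $\binom{\lceil 3n/2\rceil}{n}$; and $\lceil 3n/2\rceil\leq n+m(m-r)$ then follows from $n\leq m^2-r^2=(m-r)(m+r)<2m(m-r)$, using the hypothesis $r\leq m-1$. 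For the second, the decisive tool is the submultiplicative inequality
\[
\binom{a+b}{n}\;\leq\;\binom{a+n}{n}\binom{b+n}{n},
\]
which I would verify by expanding both sides via Vandermonde and comparing term by term. Applying it with $a=m(m-r)$ and $b=r(m-r)$, and using the trivial monotonicity $\binom{r(m-r)+n}{n}\leq\binom{n+m(m-r)}{n}$, delivers the required bound and hence, combining both, the final inequality $\delta(m,n,r)\leq\binom{n+m(m-r)}{n}^3$.

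The hardest step is recognising and proving the submultiplicative inequality above: bounding each of the three binomial factors of $\delta$ individually by $\binom{n+m(m-r)}{n}$ and counting the number of terms in the sum only yields $\binom{n+m(m-r)}{n}^4$, so the Vandermonde reduction together with the submultiplicativity bound is what saves the extra factor and produces the cubic bound.
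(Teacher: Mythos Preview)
Your argument is correct, but it takes a noticeably different route from the paper's. The paper's proof is a one-liner: it expands $\binom{n+m(m-r)}{n}^3$ via the Vandermonde identity $\binom{a+b}{a}=\sum_i\binom{a}{i}\binom{b}{i}$ (with $a=n$, $b=m(m-r)$) into a triple sum $\sum_{i_1,i_2,i_3}\binom{n}{i_1}\binom{m(m-r)}{i_1}\binom{n}{i_2}\binom{m(m-r)}{i_2}\binom{n}{i_3}\binom{m(m-r)}{i_3}$, and then observes that the $k$-th term of $\delta(m,n,r)$ is dominated by the single term of this triple sum at $(i_1,i_2,i_3)=(n-k,\,k+(m-r)^2-1,\,k)$, using only $\binom{n-1}{j}\le\binom{n}{j}$ and $\binom{r(m-r)}{k}\le\binom{m(m-r)}{k}$. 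Since $k\mapsto(i_1,i_2,i_3)$ is injective, the inequality follows immediately.

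Your approach---pulling out the central binomial, collapsing the rest by Vandermonde to $\binom{m^2-r^2}{n}$, and then invoking the submultiplicative bound $\binom{a+b}{n}\le\binom{a+n}{n}\binom{b+n}{n}$---is more circuitous but has its own merit: the intermediate estimate $\delta\le\binom{n-1}{\lfloor(n-1)/2\rfloor}\binom{m^2-r^2}{n}$ is in fact sharper than the cube, and the submultiplicative inequality is a useful standalone tool. The paper's argument is slicker because it never leaves the world of term-by-term comparison and needs no case analysis on the range of $n$; yours, on the other hand, makes transparent \emph{why} one factor of $\binom{n+m(m-r)}{n}$ suffices for each of the three binomials in $\delta$'s summand.
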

\begin{proof}
This comes straightforwardly from the formula
\[
\binom{a+b}{a}^3 = \sum_{i_1,i_2,i_3=0}^{\min(a,b)}\binom{a}{i_1}\binom{b}{i_1}\binom{a}{i_2}\binom{b}{i_2}\binom{a}{i_3}\binom{b}{i_3}
\]
applied with $a=n$ and $b=m(\rect{s}-r)$, and from the expression of $\delta(m,\rect{s},n,r)$
computed in Proposition \ref{compl:prop:degree:ratpar}.
\end{proof}

\subsubsection{Complexity of {\sf RatPar}} \label{ssec:compl:ratpar}
\new{The computation of the rational parametrization by the subroutine
  {\sf RatPar} is done via the symbolic homotopy algorithm
  \cite{SaSc17}. In this section, we analyze the
  complexity of the algorithm in \cite{SaSc17} for
  our special case.}

We suppose that $n>(m-r)\rect{(s-r)}$ and that the input of {\sf RatPar} is the
equivalent Lagrange system $\ell = \ell(A, U, w) \in
\QQ[x,y,z]^{n-1+\rect{(m+r)(s-r)}}$ built in the proof of Assertion $2$ of Proposition
\ref{compl:prop:degree:ratpar}.
\rect{First, the strategy consists in building} a second polynomial system $\tilde{\ell}
\subset \QQ[x,y,z]$, such that:
\begin{itemize}
\item the length of $\tilde{\ell}$ equals that of $\ell$, that is \rect{$=n-1+(m+r)(s-r)$};
\item for $i=1, \ldots, n-1+\rect{(m+r)(s-r)}$, the support of $\tilde{\ell}_i$ equals that
of $\ell_i$;
\item the solutions of $\tilde{\ell}$ \new{can be computed efficiently (see below)}.
\end{itemize}

Indeed, we remind that by construction, $\ell$ contains three groups
of quadratic polynomials in $\QQ[x,y,z]$, of multidegree respectively bounded
by $(1,1,0),(0,1,1)$ and $(1,0,1)$. We denote by $\Delta_1 \subset \QQ[x,y],
\Delta_2 \subset \QQ[y,z]$ and $\Delta_3 \subset \QQ[x,z]$ the supports of
the three groups, so that for example $\Delta_1=\{1,x_i,y_j,x_iy_j : 1 \leq
i \leq n, 1 \leq j \leq \rect{r(s-r)}\}$, or, equivalently, $\Delta_1$ can be
seen as the subset of $\ZZ^{n+\rect{r(m-r)}}$ made by the exponents of its monomials.
Let $\ell_i$ be with support in $\Delta_1$, $1 \leq i \leq \rect{m(s-r)}$. Hence we
generate two linear forms $g_{i,1} \in \QQ[x]$ and $g_{i,2} \in \QQ[y]$ and we
define $\tilde{\ell}_i(x,y)=g_{i,1}(x)g_{i,2}(y)$. We equivalently generate
polynomials $\tilde{\ell}_j(y,z)=g_{j,1}(y)g_{j,2}(z)$, $\rect{m(s-r)}+1 \leq j \leq
\rect{m(s-r)}+n-1$ and $\tilde{\ell}_k(x,z)=g_{k,1}(x)g_{k,2}(z)$, $\rect{m(s-r)}+n
\leq k \leq n-1+\rect{(m+r)(s-r)}$.

We deduce straightforwardly that $\tilde{\ell}$ satisfies
the above properties. {Indeed, the set $\zeroset{\tilde{\ell}}$ can be computed
by solving systems of linear equations. When the affine polynomials $g_{i,1},g_{i,2},
1 \leq i \leq n-1+\rect{(m+r)(s-r)}$,
are chosen generically, the number of linear systems to be solved equals
the multilinear B\'ezout bound $\delta(m,s,n,r)$, computed in Proposition
\ref{compl:prop:degree:ratpar}. Hence the complexity of solving the starting system
is in ${\bigO}((n+\rect{(m+r)(s-r)})^{\omega}\delta(m,\rect{s},n,r))$, where $\omega$ is the
exponent of linear algebra}. 


In \cite{SaSc17}, the authors build a homotopy path between
$\ell$ and $\tilde{\ell}$, such as
\begin{equation} \label{homotopy}
t \ell + (1-t) \tilde{\ell} \subset \QQ[x,y,z,t]
\end{equation}
where $t$ is a new variable. The system \eqref{homotopy} defines a
$1-$dimensional algebraic set, that is a curve.  We deduce by
\cite[Theorem 1, Corollary 2 and Proposition 5]{SaSc17} that, if the
solutions of $\tilde{\ell}$ are known, one can compute a rational
parametrization of the solution set of system \eqref{homotopy} within
$\softO( (\tilde{n} N\log Q + \tilde{n}^{3}) d d')$ arithmetic
operations over $\QQ$, where:
\begin{itemize}
\item $\tilde{n}$ is the number of variables in $\ell$; 
\item $N= m(\rect{s-r}) \# \Delta_1+ (n-1) \# \Delta_2+r(\rect{s}-r) \# \Delta_3$ ($\#$ is the cardinality);
\item $Q = \max_{i=1,2,3}\{\Vert q \Vert : q \in \Delta_i\}$;
\item $d$ is the number of isolated solutions of $\ell$;
\item $d'$ is the degree of the curve $\zeroset{t \ell + (1-t) \tilde{\ell}}$;
\end{itemize}

\begin{lemma} \label{mult:bounds:homotopy}
Let $\mathcal{F}_{m,\rect{s},n,r}$ and $\delta(m,\rect{s},n,r)$ be the set and the bound defined in
Proposition \ref{compl:prop:degree:ratpar}, and suppose $\mathcal{F}_{m,\rect{s},n,r} \neq \emptyset$.
Then the degree of $\zeroset{t \ell + (1-t) \tilde{\ell}}$ is in
\[
\rect{\bigO\left((n+(m+r)(s-r))\,\min\{n,m(s-r)\}\,\delta(m,s,n,r)\right).}
\]
\end{lemma}
\begin{proof}
We exploit the multilinear structure of $t \ell + (1-t) \tilde{\ell}$.
By the Multilinear B\'ezout theorem,
$\deg \zeroset{t \ell + (1-t) \tilde{\ell}}$ is bounded by the sum of the coefficients of
\[
q(s_x,s_y,s_z,s_t)=(s_x+s_y+s_t)^{m(\rect{s-r})}(s_y+s_z+s_t)^{n-1}(s_x+s_z+s_t)^{r(\rect{s-r})}
\]
modulo $I=\langle s_x^{n+1}, s_y^{r(\rect{s-r})+1}, s_z^{m(\rect{s-r})}, s_t^2\rangle \subset \ZZ[s_x,s_y,s_z,s_t]$.
It is easy to check that $q = q_1 + s_t(q_2+q_3+q_4) + g$ with $s_t^2$ that divides $g$ and
\begin{align*}
q_1 &= (s_x+s_y)^{m(\rect{s-r})}(s_y+s_z)^{n-1}(s_x+s_z)^{r(\rect{s-r})} \\
q_2 &= m(\rect{s-r}) (s_x+s_y)^{m(\rect{s-r})-1}(s_y+s_z)^{n-1}(s_x+s_z)^{r(\rect{s-r})} \\
q_3 &= (n-1)(s_x+s_y)^{m(\rect{s-r})}(s_y+s_z)^{n-2}(s_x+s_z)^{r(\rect{s-r})}\\
q_4 &= r(\rect{s-r})(s_x+s_y)^{m(\rect{s-r})}(s_y+s_z)^{n-1}(s_x+s_z)^{r(\rect{s-r})-1},
\end{align*}
and hence that $q \equiv q_1+s_t(q_2+q_3+q_4) \mod I$. Below, we bound the
contribution of $q_i, i=1 \ldots 4$. The stated bound is given by
the sum of the contributions and follows straightforwardly.

{\it Contributions of $q_1$.}
The contribution of $q_1$ is the sum of its coefficients modulo the ideal
$I'=\langle s_x^{n+1}, s_y^{r(\rect{s-r})+1}, s_z^{m(\rect{s-r})} \rangle$. This has been computed
in Proposition \ref{compl:prop:degree:ratpar}, and coincides with $\delta(m,\rect{s},n,r)$.

{\it The contribution of $q_2$.}
Write $q_2=m(\rect{s-r})\tilde{q}_2$ with $\tilde{q}_2 \in \ZZ[s_x, s_y, s_z]$.
Consequently the contribution is given by the sum of the coefficients of
$\tilde{q}_2$, modulo $I'$, multiplied by $m(\rect{s-r})$. Now, observe that
$\deg \tilde{q}_2 = n-2+\rect{(m+r)(s-r)}$ and that maxima powers admissible
modulo $I'$ are $s_x^{n}, s_y^{r(\rect{s-r})}, s_z^{m(\rect{s-r})-1}$. Hence, three configurations
give a contribution.
\begin{itemize}
\item[(A)] The coefficient of the monomial $s_x^{n-1}s_y^{r(\rect{s-r})}s_z^{m(\rect{s-r})-1}$
  in $\tilde{q}_2$, that is
  \[
  \Sigma_A=\sum_{k=0}^{r(\rect{s-r})}\binom{m(\rect{s-r})-1}{n-1-k}\binom{n-1}{k-1+(m-r)\rect{(s-r)}}\binom{r(\rect{s-r})}{k}.
  \]
\item[(B)] The coefficient of the monomial $s_x^{n}s_y^{r(\rect{s-r})-1}s_z^{m(\rect{s-r})-1}$
  in $\tilde{q}_2$, that is
  \[
  \Sigma_B=\sum_{k=0}^{r(\rect{s-r})}\binom{m(\rect{s-r})-1}{n-k}\binom{n-1}{k-1+(m-r)(\rect{s-r})}\binom{r(\rect{s-r})}{k}.
  \]
\item[(C)] The coefficient of the monomial $s_x^{n}s_y^{r(\rect{s-r})}s_z^{m(\rect{s-r})-2}$
  in $\tilde{q}_2$, that is
  \[
  \Sigma_C=\sum_{k=0}^{r(\rect{s-r})}\binom{m(\rect{s-r})-1}{n-k}\binom{n-1}{k-2+(m-r)(\rect{s-r})}\binom{r(\rect{s-r})}{k}.
  \]
\end{itemize}
So the contribution of $q_2$ equals $m(\rect{s-r})(\Sigma_A+\Sigma_B+\Sigma_C)$.

One easily deduces that $\Sigma_A \leq \delta(m,\rect{s},n,r)$ and $\Sigma_B \leq \delta(m,\rect{s},n,r)$. 
Remember that we suppose $\mathcal{F}_{m,\rect{s},n,r} \neq \emptyset$, that is $\delta(m,\rect{s},n,r) > 0$.
We claim that $\Sigma_C \leq (1+\min\{n,m(\rect{s-r})\})\,\delta(m,\allowbreak \rect{s},n,r)$. Consequently, we conclude that
the contribution of $q_2$ is $m(\rect{s-r})(\Sigma_A+\Sigma_B+\Sigma_C) \in 
\bigO\left(m(\rect{s-r})\,\min\{n,m(\rect{s-r})\}\,\delta(m,\rect{s},n,r)\right)$.

Let us prove this claim. First, denote by
\begin{align*}
\chi_1 &= \max\{0,n-m(\rect{s-r})\} \qquad \,\,\,\,\,\,\,\,\,\, \chi_2 = \min\{r(\rect{s-r}),n-(m-r)(\rect{s-r})\} \\
\alpha_1 &= \max\{0,n+1-m(\rect{s-r})\} \qquad \alpha_2 = \min\{r(\rect{s-r}),n+1-(m-r)(\rect{s-r})\}
\end{align*}
the indices such that $\delta(m,\rect{s},n,r)$ sums over $\chi_1 \leq k \leq \chi_2$
and $\Sigma_C$ over $\alpha_1 \leq k \leq \alpha_2$. Remark that
$\chi_1 \leq \alpha_1$ and $\chi_2 \leq \alpha_2$. Finally, denote by
$\varphi(k)$ the $k-$th term in the sum defining $\Sigma_C$, and
by $\gamma(k)$ the $k-$th term in the sum defining $\delta(m,\rect{s},n,r)$.

For all indices $k$ admissible for both $\delta(m,\rect{s},n,r)$ and $\Sigma_C$,
that is for $\alpha_1 \leq k \leq \chi_2$, one gets, by basic properties
of binomial coefficients (we apply $\binom{a}{b-1}=\frac{b}{a-b-1}\binom{a}{b}$), that
\[
\varphi(k) = \Psi(k)\,\gamma(k) \qquad \text{with} \,\,\, \Psi(k) = \frac{k-1+(m-r)(\rect{s-r})}{n-k-(m-r)(\rect{s-r})-1}.
\]
When $k$ runs over all admissible indices, the rational function $\Psi(k)$
is non-decreasing monotone, and its maximum is attained in $\Psi(\chi_2)$
and is bounded by $\min\{n,m(\rect{s-r})\}$. Three possible cases can hold:
\begin{enumerate}
\item $\alpha_1=0$. Hence $\chi_1=0$, $\alpha_2=r(\rect{s-r})$ and $\chi_2=r(\rect{s-r})$.
  We deduce straightforwardly from the above discussion that $\Sigma_C \leq \min\{n,m(\rect{s-r})\} \delta(m,\rect{s},n,r)$;
\item $\alpha_1=n-m(\rect{s-r})+1$ and $\chi_1=n-m(\rect{s-r})$. We deduce that $\chi_2=\alpha_2=r(\rect{s-r})$
  and that
  $
  \Sigma_C = \sum_{k=\alpha_1}^{\chi_2}\varphi(k) \leq \varphi(\alpha_1)+\min\{n,m(\rect{s-r})\}\,\delta(m,\rect{s},n,r) \leq
  (1+\min\{n,m(\rect{s-r})\})\,\delta(m,\rect{s},n,r);
  $
\item $\chi_1=0$ and $\alpha_1=n-m(\rect{s-r})+1$. Hence, we deduce the chain of inequalities
  $0 \leq n-m(\rect{s-r})+1 \leq 1$. Hence, either this case coincides with case $2$
  (if $n=m(\rect{s-r})$) or we deduce that $n=m(\rect{s-r})-1$, and we fall into case 1.
\end{enumerate}

{\it The contribution of $q_3$ and $q_4$.}
Following exactly the same path as in the case of $q_2$, one respectively deduces
that the contribution of $q_3$ is in $\bigO \left(n\,\min\{n,m(\rect{s-r})\}\,\delta(m,\rect{s},n,r)\right)$
and that of $q_4$ is in $\bigO\left(r(\rect{s-r})\,\min\{n,m(\rect{s-r})\}\,\delta(m,\rect{s},n,r)\right)$.
\end{proof}


We can now state the main result of this paragraph. {We denote by
  $\softO(D) = \bigO(D\,\log^a(D))$, that is linear complexity up to logarithmic
factors.}

\begin{theorem} \label{theo:compl:ratpar}
Let $n>(m-r)\rect{(s-r)}$. Let $A$ be a $n-$variate $m \times \rect{s}$ linear matrix,
$0 \leq r < \rect{s \leq m}$ and let $U$ be the matrix
chosen in step \ref{new:main:1} of {\sf LowRank}. Let $\delta=\delta(m,\rect{s},n,r)$ be
the bound defined in Proposition \ref{compl:prop:degree:ratpar}. Then, {\sf RatPar}
returns a rational parametrization within
\[
\rect{\softO\left((n+(m+r)(s-r))^{6}\,\delta^2\right)}
\]
arithmetic operations.
\end{theorem}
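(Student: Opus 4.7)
The strategy is to apply the symbolic homotopy complexity bound of \cite[Proposition 6.1]{jeronimo2009deformation} directly to the reduced Lagrange system $\ell$ and its start system $\tilde{\ell}$ constructed above, namely
\[
O\bigl((\tilde{n}^{2} N \log Q + \tilde{n}^{\omega+1})\,d\,d'\bigr),
\]
and then to substitute explicit bounds, in terms of $n$, $m$, $r$ and $\delta$, for each of the five parameters $\tilde{n}$, $N$, $Q$, $d$ and $d'$ appearing in this estimate. Since the homotopy start system $\tilde{\ell}$ is, by construction, a union of products of linear forms, its isolated solutions can be read off by solving decoupled linear systems, so the hypothesis of \cite[Proposition 6.1]{jeronimo2009deformation} is satisfied at negligible cost.

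First I would count the number of variables of the reduced system. After eliminating $(m-r)^2$ of the $y$ variables using $UY-S=0$ and normalising $z_1=1$, the variables are $x_1,\dots,x_n$, $r(m-r)$ remaining $y$'s and $m(m-r)-1$ Lagrange multipliers, so
\[
\tilde{n} = n - 1 + r(m-r) + m(m-r) = n - 1 + m^2 - r^2 \;=\; O(n+m^2-r^2).
\]
Next I would bound $N$, the total size of the supports. The three groups of polynomials in $\ell$ have supports $\Delta_1\subset\ZZ^{n+r(m-r)}$, $\Delta_2\subset\ZZ^{r(m-r)+m(m-r)-1}$ and $\Delta_3\subset\ZZ^{n+m(m-r)-1}$, each consisting of monomials of multidegree bounded componentwise by $(1,1)$; hence $\#\Delta_i$ is bounded by the product of the numbers of variables in the two relevant blocks plus one, which is $O((n+m^2-r^2)^2)$. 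Multiplying by the numbers $m(m-r)$, $n-1$, $r(m-r)$ of equations in each block gives $N = O((n+m^2-r^2)^3)$. Since every monomial has total degree at most two, $\log Q = O(1)$ and disappears into the $\softO$.

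For the two geometric parameters, $d$ is the number of isolated solutions of $\ell$, which by Proposition \ref{compl:prop:degree:ratpar} is at most $\delta=\delta(m,n,r)$, and $d'$ is bounded by Lemma \ref{mult:bounds:homotopy}, giving $d' = O((n+m^2-r^2)\min\{n,m(m-r)\}\,\delta) = O((n+m^2-r^2)^{2}\delta)$. Plugging everything into the cost formula, the first factor is dominated by $\tilde{n}^{2} N = O((n+m^2-r^2)^{5})$, which swallows $\tilde{n}^{\omega+1}$ for any $\omega\le 3$, and the product $d\,d' = O((n+m^2-r^2)^{2}\delta^{2})$; multiplying gives $\softO((n+m^2-r^2)^{7}\delta^{2})$, as claimed.

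The only non-routine step is verifying that the explicit start system $\tilde{\ell}$ actually qualifies as a valid starting point for the homotopy of \cite{jeronimo2009deformation} (matching supports, same length, generic coefficients so that the intermediate curve in \eqref{homotopy} is equidimensional of dimension one). This amounts to checking that for generic choices of the linear forms $k_{i,1}, k_{i,2}$ the deformation keeps the algebraic structure required by \cite[Proposition 6.1]{jeronimo2009deformation}, the rest being a bookkeeping substitution. The arithmetic bounds on monomial exponents and on $\#\Delta_i$ are then elementary counting, and the polynomial dependence on the remaining parameters collapses into the $\softO$ factor.
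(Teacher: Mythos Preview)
Your proposal is correct and follows essentially the same approach as the paper: apply \cite[Proposition 6.1]{jeronimo2009deformation} to the reduced Lagrange system and its product-of-linear-forms start system, then bound $\tilde{n}=n-1+m^2-r^2$, $N=O(\tilde{n}^3)$, $d\le\delta$, and $d'=O(\tilde{n}^2\delta)$ via Lemma \ref{mult:bounds:homotopy}, and multiply. The only cosmetic difference is that you argue $\log Q=O(1)$ from the bilinear monomial structure, whereas the paper uses the cruder bound $Q\le\tilde{n}$; both are swallowed by the $\softO$ and lead to the same $\softO((n+m^2-r^2)^7\delta^2)$.
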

\begin{proof}
Following the notation introduced above, $\tilde{n} = n-1+\rect{(m+r)(s-r)}$.
the bound for $d$ is $\delta$
and is given in Proposition \ref{compl:prop:degree:ratpar} and
a bound for $d'$ is given in Lemma \ref{mult:bounds:homotopy}, and
is in $\softO(\tilde{n}^2 \delta)$.
Moreover, $N \in {O}(mnr(\rect{s-r})^2)$, and hence $N \in {O}(\tilde{n}^3)$.
The proof follows from \cite[Proposition 5]{SaSc17},
since the maximum diameter of $\Delta_1,\Delta_2,\Delta_3$ is bounded
above by $\tilde{n}$, that is $Q \leq \tilde{n}$ \rect{in the notation above}.
\end{proof}

\subsubsection{Complexity of subroutines}
\label{ssec:compl:subrout}
For these complexity bounds, we refer to those given in \cite[Lemma 3
and 4]{PS13} (see
\cite[Lemma J.3, J.5 and J.6]{SaSc13} for a unified treatment of these algorithms)
from which they are obtained straightforwardly.

\begin{proposition} \label{compl:prop:allthesubrout}
Let $\delta(m,\rect{s},n,r)$ be the bound defined in Proposition \ref{compl:prop:degree:ratpar}.
At the first recursion step of {\sf LowRankRec}, the following holds:
\begin{itemize}
\item the complexity of {\sf Project} is in $\softO\left((n+\rect{(m+r)(s-r)})^2 \: (\delta(m,\rect{s},n,r))^2\right)$;
\item the complexity of {\sf Lift} is in $\softO\left((n+\rect{(m+r)(s-r)}) \: (\delta(m,\rect{s},n,r))^2\right)$;
\item the complexity of {\sf Union} is in $\softO\left((n+\rect{(m+r)(s-r)}) \: (\delta(m,\rect{s},n,r))^2\right)$.
\end{itemize}
\end{proposition}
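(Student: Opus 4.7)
The plan is to recognize that Proposition \ref{compl:prop:allthesubrout} is essentially a packaging result: each of the four subroutines has a generic complexity bound proved in \cite{SaSc13} as a function of two parameters, namely the number of variables $N$ of the ambient space on which the input rational parametrization is defined, and an upper bound $D$ on its degree. The work is to identify $N$ and $D$ correctly at the first recursion call of {\sf LowRankRec} and then substitute them into the cited bounds.

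First, I would fix the inputs: at the first call of {\sf LowRankRec} the rational parametrizations passed to {\sf Project}, {\sf Lift}, {\sf Image}, {\sf Union} arise either from {\sf RatPar} applied to $\ell(A\circ M,U,S,v)$ (Step 3) or from intermediate parametrizations whose ambient dimensions are bounded by it. Using the reduction of variables performed in the proof of Proposition \ref{compl:prop:degree:ratpar} and reused in the proof of Theorem \ref{theo:compl:ratpar}, the underlying number of variables is $\tilde{n}=n-1+m^2-r^2$, so that $\tilde{n}+1=n+m^2-r^2$ is the correct uniform choice of $N$ to upper-bound the ambient size at every stage of the first recursion step (projection to $n\le n+m^2-r^2$ variables, lifting by one variable, the $n\times n$ linear change of variables, and the union of two parametrizations in $n$ variables). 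Likewise, the uniform bound $D=\delta(m,n,r)$ on the degree of all input parametrizations is guaranteed by Proposition \ref{compl:prop:degree:ratpar}.

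Second, I would invoke directly the four complexity statements of \cite{SaSc13}: \cite[Lemma 10.1]{SaSc13} yields $\softO(N^2 D^2)$ arithmetic operations for {\sf Project}, \cite[Lemma 10.3]{SaSc13} gives $\softO(N D^2)$ for {\sf Lift}, \cite[Lemma 10.5]{SaSc13} gives $\softO(N^2 D+N^3)$ for {\sf Image} (the two terms corresponding respectively to composition with the linear map and to the cost of computing/inverting the change of variables), and \cite[Lemma 10.6]{SaSc13} yields $\softO(N D^2)$ for {\sf Union}. Substituting $N=n+m^2-r^2$ and $D=\delta(m,n,r)$ into these four bounds produces exactly the four stated estimates.

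The only step that requires any real care—and which I expect to be the main (minor) obstacle—is justifying that $n+m^2-r^2$ is the right uniform choice of $N$. Concretely, one has to check that the ambient dimensions encountered in the four steps (which are $n+m(m-r)+(2m-r)(m-r)$ before elimination, $n$ after projection, $n+1$ after lifting, etc.) are all bounded above by $n+m^2-r^2$, and that the preliminary eliminations (of $(m-r)^2$ variables via $UY(y)-S$ and of one Lagrange multiplier via the affine chart $z_1=1$) do not change these bounds. Once this bookkeeping is done the four complexity statements follow immediately from the cited lemmas, with no further algebraic argument.
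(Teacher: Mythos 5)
Your proposal is correct and follows exactly the same route as the paper, which proves this proposition simply by citing \cite[Lemmas 10.1, 10.3, 10.5, 10.6]{SaSc13} and substituting the number of variables $N = n+m^2-r^2$ and the degree bound $D = \delta(m,n,r)$ from Proposition \ref{compl:prop:degree:ratpar}. Your additional bookkeeping on why $n+m^2-r^2$ uniformly bounds the ambient dimensions is more explicit than what the paper provides, but it is the same argument.
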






\section{Regularity of the incidence variety} \label{sec:regularity}

The goal of this section is to prove Proposition \ref{prop:regularity}.
We introduce the notation $\mathfrak{B}$ representing a $m\times s$ matrix,
with $m\geq s$, whose entries are indeterminates $\mathfrak{b}=(\mathfrak{b}_{i,j})$; similarly,
the $(s-r)\times s$ matrix $\mathfrak{U}$ whose entries are indeterminates
$\mathfrak{u}=(\mathfrak{u}_{i,j})$, and we use the notation $\mathfrak{a}=(\mathfrak{a}_{\ell,i,j})$
to denote generic entries of the linear matrix $A(x)$. All the projection maps
will be denoted by $\pi$ when source and target spaces are clear from the context.

\mynew{
  
\begin{proof}[of Assertion 1 of Proposition \ref{prop:regularity}]
    By \cite[Prop 3.1]{ranestad2012algebraic}, there exists a
    non-empty Zariski open set
    $\zarA'_1\subset (\CC^{m \times \rect{s}})^{n+1}$ such
    that for $A\in \zarA'_1$ and all $0\leq p \leq r$,
    $\setD_p \subset \CC^n$ is either empty or
    \rect{$n-(m-p)(s-p)$}-equidimensional, and
    $\sing(\setD_p) = \setD_{p-1}$.
    It remains to prove that there exists a non-empty Zariski open set
    $\zarA_1'' \subset (\CC^{m \times \rect{s}})^{n+1}$ such
    that, for $A\in \zarA_1''$ and all $0\leq p \leq r$, the ideal
    generated by the $(p+1, p+1)$ minors of $A(x)$ is
    radical. Defining $\zarA_1$ as the intersection of $\zarA_1'$ and
    $\zarA_1''$ leads to the following conclusion: there exists a
    non-empty Zariski open set
    $\zarA_1\subset (\CC^{m \times \rect{s}})^{n+1}$ such that for
    $A\in \zarA_1$, $A$ satisfies $\sfG_1$.

   {}
    
   By \cite[Prop. 12.2]{harris1992algebraic}, the ideal
   $I \subset \CC[\mathfrak{b}]$ generated by the $(p+1, p+1)$ minors
   of $\mathfrak{B}$ is radical and $\zeroset{I}$ is prime, of
   co-dimension $(m-p)(s-p)$. We deduce that the Jacobian matrix of
   the set of generators of $I$ has full rank $(m-p)(s-p)$ when
   instantiated at a smooth point (having rank exactly $p$) of
   $\zeroset{I}$. A simple dimension count shows that one can apply
   Bertini's theorem \cite[Theorem 17.16]{harris1992algebraic} when
   adding generic linear forms
   $\{L_{i,j} = \mathfrak{b}_{i,j}-\sum_{\ell} \mathfrak{a}_{\ell,i,j}
   x_\ell\}$
   (where we put by convention $x_0=1$) to the ideal
   $I'=I+\langle L_{i,j} \rangle$ to deduce that one obtains a prime
   ideal. Using the Jacobian crierion \cite[Theorem
   16.19]{Eisenbud95}, we deduce that the rank of the Jacobian matrix
   of $I'$ equals the rank of the Jacobian matrix of $I$ (which is
   $(m-p)(s-p)$) plus $ms = \#\{L_{i,j}\}$, since an identity
   submatrix will appear in correspondence with the derivatives with
   respect to $\mathfrak{a}_{0,i,j}$. We consider now the restriction
   to $\zeroset{I'}$ of the projection
   $\pi(\mathfrak{b},\mathfrak{a},x)= \mathfrak{a}$ eliminating
   variables $\mathfrak{b},x$. By Sard's theorem the singular values of
   $\pi$ lie in a Zariski-closed set of the image of $\pi$.  We deduce
   that there exists a non-empty and Zariski-open set
   $\zarA_1'' \subset (\CC^{m \times \rect{s}})^{n+1}$ such that if
   $A \in \zarA_1''$, the ideal
   $I'' = I+\langle \mathfrak{b}_{i,j}-\sum_\ell a_{\ell,i,j}x_\ell
   \rangle \subset \CC[\mathfrak{b},x]$
   is radical. Thus the intersection $I'' \cap \CC[x]$ (that
   eliminates the variables $\mathfrak{b}$) still yields a radical
   ideal in $\CC[x]$.  Finally, note that this elimination ideal
   coincides with the ideal generated by the $(p+1,p+1)$ minors of
   $A(x)$ (indeed, the elimination is performed by substituting the
   generic entries of $\mathfrak{B}$ with the entries of $A$). We
   conclude that if $A \in \zarA_1''$, the ideal of $(p+1,p+1)$ minors
   of $A(x)$ is radical.
\end{proof}

\begin{proof}[of Assertion 2 of Proposition \ref{prop:regularity}]
    We prove now that there exists a non-empty Zariski open set
    $\zarA_2\subset (\CC^{m \times \rect{s}})^{n+1}$ such that the
    following holds. For $A\in \zarA_2$, there exists a non-empty
    Zariski open set $\zarU_A\subset \CC^{(s-r) \times \rect{s}}$ such
    that for $U\in \zarU_A$, Assertions $(2a)$ and $(2b)$ of
    Proposition~\ref{prop:regularity} are satisfied. Taking $\zarA$ as
    the intersection of $\zarA_1$ and $\zarA_2$ will end the proof.

    Let
    $\mathfrak{F} \subset
    \CC[\mathfrak{a},\mathfrak{b},\mathfrak{u},y]$
    denote the vector of polynomials consisting of: the generic linear
    forms
    $\{L_{i,j} = \mathfrak{b}_{i,j}-\sum_{\ell}
    \mathfrak{a}_{\ell,i,j} x_\ell\}$
    as in the proof of Assertion 1, and the entries of
    $\mathfrak{B} Y$ and $\mathfrak{U}Y-\Id_{s-r}$. First, remark that
    all solutions $(A,B,U,y)$ of $\mathfrak{F}=0$ satisfy
    $\rank\,U = \rank\,Y = s-r$ by the classical condition
    $\rank \, MN \leq \min \{\rank\,M, \rank\,N\}$.  The Jacobian
    matrix $\jac \mathfrak{F}$ of $\mathfrak{F}$ has full rank when
    restricted to $\zeroset{F}$, since we can construct a non-singular
    block sub-matrix of $\jac \mathfrak{F}$ made by the following
    blocks:
    \begin{itemize}
    \item
      the derivatives of forms $L_{i,j}$ with respect to $\mathfrak{a}_{0,i,j}$, a $ms \times ms$
      block equal to $\Id_{ms}$;
    \item
      the derivatives of $\mathfrak{B} Y$ with respect to $\mathfrak{b}$,
      a $m(s-r) \times ms$ block-diagonal matrix with $m$ blocks equal to $Y^T$;
    \item
      the derivatives of $\mathfrak{U}Y-\Id_{s-r}$ with respect to $\mathfrak{u}$,
      a $(s-r)^2 \times s(s-r)$ block-diagonal matrix with $s-r$ blocks equal to $Y^T$.
    \end{itemize}

    By the Jacobian criterion, $\mathfrak{F}$ satisfies $\sfG_2$,
    hence $\zeroset{\mathfrak{F}}$ is smooth and equidimensional. We
    consider the restriction of
    $\pi(\mathfrak{a},\mathfrak{b}, \mathfrak{u},x,y) = \mathfrak{a}$
    to $\zeroset{\mathfrak{F}}$. Applying Sard's theorem, we obtain a
    non-empty and Zariski-open set $\zarA_2$ such that, for
    $A \in \zarA_2$, the ideal generated by $\mathfrak{F}'$ (obtained
    from $\mathfrak{F}$ by instantiating $\mathfrak{a}$ to the entries
    of $A$) satisfies $\sfG_2$.

    Let us fix $A \in \zarA_2$. Considering the new projection
    $\pi: (\mathfrak{b},\mathfrak{u},x,y) \to \mathfrak{u}$ restricted
    to $\zeroset{\mathfrak{F}'}$, and applying Sard's theorem implies
    that there exists $\zarU'_A \subset \CC^{(s-r) \times s}$,
    non-empty and Zariski-open, such that if $U \in \zarU'_A$,
    instantiating $\mathfrak{u}$ to $U$ yields a radical ideal
    $I \subset \CC[\mathfrak{b},x,y]$, with the Jacobian matrix of $I$
    full rank at every solution. Now, $f(A,U)$ with $A \in \zarA_2$
    and $U \in \zarU'_A$ generates the elimination ideal
    $I \cap \CC[x,y]$, hence it is still radical. Since $f(A,U)$ is
    obtained from $I$ by instantiating $\mathfrak{b}$ to the entries
    of $A$, the Jacobian matrix $\jac{f(A,U)}$ is a submatrix of the
    Jacobian matrix of a set of generators of $I$, which has full
    rank. Since the polynomials in $f(A,U)$ do not depend on variables
    $\mathfrak{b}$, it is easily seen that $\jac{f(A,U)}$ has full
    rank too. Hence we deduce that, for $A \in \zarA_2$ and
    $U \in \zarU'_A$, ${f(A,U)}$ satisfies $\sfG_2$, as claimed.

    It remains to prove that there exists a non-empty Zariski open set
    $\zarU''_A\subset \CC^{(s-r) \times s}$ such that for
    $U\in \zarU''_A$, assertion $(2b)$ holds. Finally taking the
    intersection of $\zarU'_A$ and $\zarU''_A$ to define $\zarU_A$
    ends the proof.  Let $A \in \zarA_2$, $p \leq r$ and let $Z$ be
    one of the (finitely-many) irreducible components of $\setD_p$,
    and let $d$ be its dimension (all such components have the same
    dimension, since we proved that $\setD_p$ is empty or
    equidimensional). Intersecting $Z$ with $d$ general hyperplanes,
    we get a finite number of smooth points in $Z$. For every such
    point $x \in Z$, we first build a Zariski-open set
    $\zarU''_{A,p,Z,x}\subset \CC^{(s-r) \times s}$, as follows.

    The rank of $A(x)$ is $p$ since $x$ is a smooth point of $Z$
    (because $Z$ is an irreducible component of the Zariski closure of
    the set of points at which $A$ has rank $p$). The polynomial
    system $y \mapsto f(A,U)$ is linear in $y$. Since
    $\rank A(x) = p$, the condition $A(x)Y(y)=0$ defines a linear
    space $V = \{Y(y) \in \CC^{\rect{s \times (s-r)}} : A(x)Y(y)=0 \}$
    of dimension ${(s-p)(s-r)}$. Since $p \leq r$, remark that
    ${(s-r)^2 \leq (s-p)(s-r)}$. For a generic
    $U \in \CC^{(s-r) \times s}$, the ${(s-r)^2}$ affine equations
    $UY(y)-\Id_{s-r}=0$ define a linear space intersecting $V$.  Hence
    there exists a non-empty Zariski open set
    $\zarU''_{A,p,Z,x} \subset \CC^{(s-r) \times s}$ such that, if
    $U \in \zarU''_{A,p,Z,x}$, the linear system
    $A(x)Y(y)=0, UY(y)-\Id_{s-r}=0$ has at least one solution.

    One concludes by defining
    $$
    \zarU''_A = \bigcap_{p \leq r} \bigcap_{Z \subset \setD_p \cap \RR^n} \bigcap_{x \in Z} \zarU''_{A,p,Z,x}
    $$
    which is non-empty and Zariski open by the finiteness of the number of irreducible components of
    $\setD_p \cap \RR^n$ and of the set of points $x$ in $Z$.
\end{proof}

\begin{proof}[of Assertion 3 of Proposition \ref{prop:regularity}]
  By Sard's Theorem, the critical values of the projection
  $\pi_w(x) = w^Tx$ are finitely many, hence the regular values of
  this map define a non-empty Zariski open set
  $\zarfiber_A \subset \CC$.  For $w$ as in the hypothesis, and
  $t \in \zarfiber_A$, we denote by
  $\setD_p'= \{x \in \CC^{n-1} : \rank\,\restr{A}{w,t}(x) \leq p \}$.
  As in the proof of Assertion 1, since $A \in \zarA$, we deduce that
  if $t \in \zarfiber_A$, then the ideal
  $I=\left\langle \{(p+1)\times(p+1) \text{ minors of } A\},
    \pi_w(x)-t \right\rangle$
  is still radical and $\zeroset{I}$ has co-dimension
  $(m-p)(s-p)+1$. Remark that $I \cap \RR[x_2,\ldots,x_n]$ is
  generated by the $(p+1)\times(p+1)$ minors of $\restr{A}{w,t}$, it
  is still radical and $\setD_p'=\zeroset{I \cap \RR[x_2,\ldots,x_n]}$
  has co-dimension $(m-p)(s-p)$ in $\CC^{n-1}$. Moreover, always for
  $t \in \zarfiber_A$, a point in $\setD_p$ is regular if and only if
  it is regular in $\setD_p \cap \zeroset{\pi_w(x)-t}$.  Hence we
  deduce that for $t \in \zarfiber_A$, the matrix $\restr{A}{w,t}$
  satisfies Property $\sfG_1$, as claimed.
\end{proof}

}

\section{Dimension of Lagrange systems} \label{sec:dimension}

The goal of this Section is to prove Proposition \ref{prop:dimension}.
We first need to give a local description of the incidence variety $\setV_r$
and of the solution set $\setZ(A,U,w)$ of the Lagrange system $\ell(A,U,w)$.

\subsection{Local description of the incidence variety} \label{sectionlocal}
\mynew{As in our previous work \cite{HNS2014}, we need to compute equations for the incidence
sets lifting the determinantal varieties. Here we generalize the equations in \cite[Section 4]{HNS2014}
to the case of low-rank rectangular matrices.}

Let $A=A_0+x_1A_1+\ldots+x_nA_n$ be a $n-$variate $m \times \rect{s}$ linear
matrix with coefficients in $\QQ$, and let $r \leq \rect{s-1}$.
From now on, for $g \in \QQ[x]$, we denote by $\QQ[x]_g$ the localization of
the ring $\QQ[x]$ at $\langle g \rangle$, \mynew{see \cite{Eisenbud95}}.
We recall that the polynomial system defining $\setV_r$ is given
by $f(A, U)$, which contains the entries of $A(x) Y(y)$
and $U Y(y)-\Id_{s-r}$. For $p \leq r$, let $N$ be the upper-left
$p \times p$ submatrix of $A$, so that
\begin{equation} \label{blockdivision}  
A =
\left(
\begin{array}{cc}
N & Q \\
P^T & R
\end{array}
\right)
\end{equation}
with $P \in \QQ[x]^{p \times (m-p)},\rect{Q \in \QQ[x]^{p \times (s-p)}}$ and $R \in \QQ[x]^{(m-p)  \times \rect{(s-p)}}$.
The next Lemma computes the equations of $\setV_r$ in the local ring $\QQ[x,y]_{\det N}$.

\begin{lemma} \label{lemma:local:incidence} Let $A,N,Q,P,R$ be as
  above, and $U$ be any full-rank matrix. Then there exist
  $\{q_{i,j}\}_{1 \leq i \leq p, 1 \leq j \leq
    \rect{s-p}},\{q'_{i,j}\}_{\rect{1 \leq i \leq m-p,1 \leq j \leq
      s-p}} \subset \QQ[x]_{\det N}$
  such that the constructible set
  $\setV_r \cap \{(x,y) : \det N(x) \neq 0\}$ is defined by the
  equations
\begin{align*}
\rect{y_{i,j} - q_{i,1}y_{p+1,j} - \ldots - q_{i,s-p}y_{s,j} = 0} & \qquad \rect{1 \leq i \leq p, 1 \leq j \leq s-r} \\
\rect{q'_{i,1}y_{p+1,j} + \ldots + q'_{i,s-p}y_{s,j} = 0} & \qquad \rect{1 \leq i \leq m-p, 1 \leq j \leq s-r} \\
U Y(y) - \Id_{s-r} = 0&.
\end{align*}
\end{lemma}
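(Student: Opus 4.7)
The plan is to perform a Gaussian-elimination / Schur-complement reduction on the block system $A(x)Y(y)=0$ in the localization at $\det N$. Write $Y(y)$ as a block column
$$Y(y) = \begin{pmatrix} Y_1(y) \\ Y_2(y) \end{pmatrix},$$
where $Y_1 \in \QQ[y]^{p \times (m-r)}$ collects the first $p$ rows of $Y(y)$ and $Y_2 \in \QQ[y]^{(m-p) \times (m-r)}$ the last $m-p$ ones. Using the block decomposition \eqref{blockdivision} of $A(x)$, the matrix equation $A(x)Y(y)=0$ is equivalent to the pair
$$N(x)\,Y_1(y) + Q(x)\,Y_2(y) = 0, \qquad P'(x)\,Y_1(y) + R(x)\,Y_2(y) = 0.$$

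First I would observe that in the local ring $\QQ[x]_{\det N}$ the matrix $N(x)$ is invertible, with inverse $N(x)^{-1}$ whose entries can be expressed via Cramer's rule as elements of $\QQ[x]_{\det N}$. Hence the first block can be solved as $Y_1(y) = -N(x)^{-1}Q(x)\,Y_2(y)$, which, denoting by $q_{i,k}$ the $(i,k)$ entry of $-N(x)^{-1}Q(x) \in \QQ[x]_{\det N}^{p \times (m-p)}$, is exactly the first family of equations in the statement:
$$y_{i,j} - q_{i,1}\,y_{p+1,j} - \cdots - q_{i,m-p}\,y_{m,j} = 0, \quad 1 \leq i \leq p,\, 1 \leq j \leq m-r.$$
Next I would substitute this expression for $Y_1$ into the second block, obtaining
$$\bigl(R(x) - P'(x)\,N(x)^{-1}Q(x)\bigr)\,Y_2(y) = 0.$$
The matrix in parentheses is the Schur complement of $N(x)$ in $A(x)$; denoting its $(i,k)$ entry by $q'_{i,k} \in \QQ[x]_{\det N}$ yields the second family of equations. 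The constraint $UY(y) - S = 0$, being independent of the elimination just performed, is carried over verbatim.

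It remains to check that these new equations generate exactly the same ideal in $\QQ[x,y]_{\det N}$ as $f(A,U,S)$. The key step is that the transformation used to rewrite $AY = 0$ is invertible over $\QQ[x]_{\det N}$: multiplying the block system on the left by
$$\begin{pmatrix} N^{-1} & 0 \\ -P'N^{-1} & I_{m-p} \end{pmatrix}$$
turns $AY = 0$ into the new system, and this left-multiplication is a unimodular row operation over the local ring, hence preserves the ideal. Therefore the zero sets in the open subset $\{\det N(x) \neq 0\}$ coincide. The only mildly delicate point is bookkeeping on denominators to ensure that all $q_{i,k}$ and $q'_{i,k}$ lie in $\QQ[x]_{\det N}$, but this is automatic from Cramer's rule since the denominator $\det N(x)$ is inverted in the localization.
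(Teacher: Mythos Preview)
Your proof is correct and essentially identical to the paper's own argument: both perform a Schur-complement reduction by left-multiplying $A(x)Y(y)=0$ by the matrix $\begin{pmatrix} N^{-1} & 0 \\ -P'N^{-1} & I_{m-p} \end{pmatrix}$ (the paper factors this as a product of two elementary block matrices), then rename the entries of $-N^{-1}Q$ and of the Schur complement $R-P'N^{-1}Q$. Your justification that this is a unimodular row operation over $\QQ[x]_{\det N}$, hence preserves the ideal and the zero set on $\{\det N \neq 0\}$, is slightly more explicit than the paper's bare ``if and only if'', but the mathematical content is the same.
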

\begin{proof}
We denote by $Y^{(1)}$ and $Y^{(2)}$ the submatrices of $Y(y)$ containing respectively the first
$p$ rows and the last $\rect{s-p}$ rows. We also use the block-division of $A$ as in
\eqref{blockdivision}. We claim that in $\QQ[x,y]_{\det N}$ the $m(\rect{s-r})$ equations
$A(x) Y(y) = 0$ are equivalent to the $m(\rect{s-r})$ equations:
\[
\left(
\begin{array}{c}
\Id_p Y^{(1)} + N^{-1} Q Y^{(2)} \\
\Sigma(N) Y^{(2)}
\end{array}
\right) = 0
\]
where $\Sigma(N) = R-P^T N^{-1}Q$
is the Schur complement of $N$ in $A$. Renaming the entries of
$N^{-1}Q$ and $\Sigma(N)$ concludes the proof. To prove the claim, remark that since $\det N \neq 0$,
$A(x) Y(y) = 0$ if and only if
\[
\left(\begin{array}{cc} \Id_p & 0 \\ -P^T & \Id_{m-p} \end{array} \right)
\left( \begin{array}{cc} N^{-1} & 0 \\ 0 & \Id_{m-p} \end{array} \right)
\left( \begin{array}{cc} N & Q \\ P^T & R \end{array} \right) Y(y) = 0.
\]
\end{proof}

\subsection{The rank at a critical point} \label{subsec:rank}
Given $A, N, P, Q, R, \Sigma(N)$ as above, let 
$$
\widetilde{A} =
\left(
\begin{array}{cc}
\Id_p & N^{-1}Q \\
0 & \Sigma(N)
\end{array}
\right).
$$
Lemma \ref{lemma:local:incidence} implies that the equations of
$\setV_r$ in the open set $\{(x,y) : \det N \neq 0\}$ can be rewritten
as $\widetilde{A}(x) Y(y)=0$ and $U Y(y) - \Id_{s-r}=0$: the polynomial
entries of the above expressions are elements of the local ring
$\QQ[x]_{\det \, N}$. Now, from the first group of relations
$\widetilde{A}(x) Y(y)=0$ one eliminates variables
$\{y_{i,j}\}_{1 \leq i \leq p, 1 \leq j \leq \rect{s-r}}$, which can
be expressed as polynomial functions of $x$ and
$\{y_{i,j}\}_{p+1 \leq i \leq \rect{s}, 1 \leq j \leq \rect{s-r}}$.
That is, using the notations introduced in Lemma
\ref{lemma:local:incidence}, we can express the entries of $Y^{(1)}$
as polynomials in $x$ and in the entries of $Y^{(2)}$.

Now, consider the relations $U Y(y) - \Id_{s-r} = 0$ where the entries
of $Y^{(1)}$ have been eliminated. This is a linear system in the
entries of $Y^{(2)}$ with coefficients in $\QQ[x]_{\det N}$. Since
$\Id_{s-r}$ is full-rank, then $U$ is full-rank and hence
$U Y(y) - \Id_{s-r} = 0$ consists of $(\rect{s-r})^2$ independent
relations.  Finally one can eliminate $(\rect{s-r})^2$ among the
$\rect{(s-p)(s-r)}$ entries of $Y^{(2)}$ (suppose the first
$(\rect{s-r})$ rows) and re-write $\Sigma(N)Y^{(2)} = 0$ as
$(m-p)(\rect{s-r})$ relations in $x$ and in the last
$(r-p)(\rect{s-r})$ entries of $Y^{(2)}$.

Let us call $F$ this polynomial system, and consider a vector of Lagrange multipliers
$z = (z_1, \ldots, z_{(m-p)(\rect{s-r})})$ and the polynomial system
\[
(g_1, \ldots, g_n) = z^T D_xF - (w_1, \ldots, w_n).
\]
The solutions to the above polynomial system contain the critical points
of \new{the projection $\pi_w \colon \CC^n \to \CC$, $\pi_w(x) = w^T x \coloneqq w_1x_1+\cdots+w_nx_n$,}
restricted to $\setV_r \cap \{(x,y) : \det N \neq 0\}$.
The next Lemma shows that, when $w$ is generic in $\CC^n$, the solutions to
the Lagrange systems project on points of $\setD_r$ with rank exactly $r$
(namely in $\setD_r \setminus \setD_{r-1}$).


\begin{lemma} \label{lemma:crit:points} Let $A,U$ be as above and
  suppose that $A$ satisfies $\sfG_1$.  Let $p \leq r-1$ and let
  $g=(g_1, \ldots, g_n)$ be the polynomial system defined above.
  Then there exists a non-empty Zariski open set
  $\widetilde{\mathscr{W}}_{A,U} \subset \CC^{n}$ such that the following holds:
  if $w \in \widetilde{\mathscr{W}}_{A,U}$ then the system $g(x,y,z,w) = 0$ has
  no solutions in $x,y,z$.
\end{lemma}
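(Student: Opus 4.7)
I plan to interpret the conclusion of the lemma as follows: the system $g = 0$, considered together with the implicit constraints $F(x,y') = 0$ and $\text{rank}(A(x)) \leq p$, has no solution for $w$ generic; under this reading, the Zariski open set $\widetilde{\mathscr{W}}$ will be produced as the complement of the closure of the image of a suitable morphism whose source has dimension strictly less than $n$.

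A first observation concerns the behaviour of $F$ on the stratum $\{x : \text{rank}(A(x)) \leq p\}$. In the local chart $\{\det N \neq 0\}$ one has $\text{rank}(A(x)) = p + \text{rank}(\Sigma(N)(x))$, so $\text{rank}(A(x)) \leq p$ forces $\Sigma(N)(x) = 0$. Because the reduced system $F$ is built from the entries of $\Sigma(N)\, Y^{(2)}$ after eliminating $(m-r)^2$ of the $y$-variables through $UY - S = 0$, the identity $\Sigma(N)(x) = 0$ implies $F(x, y') \equiv 0$ for every admissible $y'$. Hence, on the stratum $\text{rank}(A) \leq p$ the equation $F = 0$ holds automatically and the remaining variables $y'$ together with the Lagrange multipliers $z$ are free.

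Introduce the constructible set
\[
\Xi = \{(x, y', z) \in \CC^n \times \CC^{(r-p)(m-r)} \times \CC^{(m-p)(m-r)} : \text{rank}(A(x)) \leq p\}.
\]
By the previous step, $\Xi$ fibers over $\setD_p$ with fibers of dimension $(r-p)(m-r) + (m-p)(m-r)$. Invoking the bound $\dim \setD_p \leq n - (m-p)^2$ supplied by $\sfG_3$ (analogous to the last assertion of Proposition \ref{prop:regularity}), a direct algebraic simplification gives
\[
\dim \Xi \;\leq\; n - (m-p)^2 + (r-p)(m-r) + (m-p)(m-r) \;=\; n - (r-p)^2.
\]
Consider now the regular map $\Phi \colon \Xi \to \CC^n$ defined by $\Phi(x, y', z) = z^\top D_x F(x, y')$. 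By construction, a point $w$ lies in $\Phi(\Xi)$ if and only if $g = z^\top D_x F - w = 0$ admits a solution in $\Xi$. Because $\Phi(\Xi)$ is constructible of dimension at most $n - (r-p)^2$, which is strictly less than $n$ since $p \leq r-1$ forces $(r-p)^2 \geq 1$, its Zariski closure is a proper algebraic subset of $\CC^n$. One then sets $\widetilde{\mathscr{W}} = \CC^n \setminus \overline{\Phi(\Xi)}$, which is a non-empty Zariski open set with the required property.

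The main obstacle I foresee is justifying that this local analysis suffices to capture all relevant solutions of $g = 0$. One must ensure that the stratum $\text{rank}(A) \leq p$ is uniformly handled, either by covering it with finitely many charts $\{\det N_j \neq 0\}$ for distinct $p \times p$ submatrices $N_j$ and intersecting the resulting non-empty Zariski opens, or by arguing directly in a global setting via $\sfG_3$. A secondary delicate point is to check that the elimination performed through $UY - S = 0$ preserves the identity $F(x, y') \equiv 0$ on the stratum; this follows from the linearity of the elimination and the fact that the vanishing of $\Sigma(N)$ is inherited by the reduced polynomials.
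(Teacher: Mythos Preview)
Your proof is correct and takes essentially the same approach as the paper: both use the bound $\dim\setD_p\leq n-(m-p)^2$ supplied by the genericity hypothesis to conclude that the locus of admissible $(x,y',z)$ has dimension at most $n-(r-p)^2<n$, and then define $\widetilde{\mathscr{W}}$ as the complement of the Zariski closure of its image in the $w$-space $\CC^n$. The only cosmetic difference is that the paper works with the graph $C=\{(x,y',z,w):g=0,\ \det N\neq 0,\ \rank A(x)=p\}$ in the enlarged ambient space and projects onto $w$, which is equivalent to taking the image of your map $\Phi$ on $\Xi$.
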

\begin{proof}
Let $C \subset \CC^{2n+\rect{(r-p)(s-r)+(m-p)(s-r)}}$ be the
constructible set defined by $g=0$ and by $\det N \neq 0$ and
$\rank \, A(x) = p$, \rect{where the coordinates of $w$ are variables}.
Let $\overline{C}$ be the Zariski closure of $C$.
Let $\pi_x \colon (x,y,z,w) \to x$ be the restriction of the projection on the
first $n$ variables to $C$.
The image $\pi_x(C)$ is dense in $\setD_p$. Hence, since $A$ satisfies $\sfG_1$, it has dimension at most
$\rect{n-(m-p)(s-p)}$. The fiber of $\pi_x$ over a generic point
$x \in \setD_p$ is the graph of the polynomial function
$w = z^TD_xF$, and so it has codimension $n$ and dimension
$\rect{(r-p)(s-r)+(m-p)(s-r) = (m+r-2p)(s-r)}$. By the Theorem of
the Dimension of Fibers \cite[Sect. 6.3, Th. 7]{Shafarevich77}
one deduces that the dimension of $C$ (and of $\overline{C}$) is
at most
\[\rect{n-(m-p)(s-p)+(m+r-2p)(s-r) = n-(r-p)(m-p+r-s)\allowbreak{}<n}.\]
\allowbreak{} We deduce that the projection of $\overline{C}$ onto the space $\CC^n$ of $w$ is a
constructible set of dimension at most $n-1$, and it is included
in a hypersurface $H \subset \CC^n$. Defining $\widetilde{\mathscr{W}}_{A,U} = \CC^n
\setminus H$ ends the proof.
\end{proof}

\subsection{Local description of the Lagrange system} \label{subsec:localLagr}

We consider the incidence variety $\setV_r=\setV_r(A, U)$ and the restriction of
the projection $\Pi_w(x,y)=w^T x$ to $\setV_r$, with $w \in \CC^n$. Under the hypothesis
that $A$ satisfies \mynew{$\sfG_1$ and that $f(A,U)$ satisfies $\sfG_2$}, the set $\setV_r$
is either empty or smooth and equidimensional of codimension $c \coloneqq (m+s-r)(s-r)$.
The set of critical points of the restriction of $\Pi_w$ to $\setV_r$ is the projection
on the $(x,y)$-space of the solutions of the Lagrange system $\ell(A,U,w)$:
\begin{equation}
\label{eqlag}
f(A, U)=0
\qquad
(g,h) \coloneqq
z^T
\left(
\begin{array}{cc}
D_xf & D_yf \\
w^T & 0
\end{array}
\right)=0,
\end{equation}
where $z=(z_1, \ldots, z_{c},1)$. 
The polynomial system \eqref{eqlag} consists of $n+c+\rect{s(s-r)}$ polynomials in
$n+c+\rect{s(s-r)}$ variables.
We show that it can be re-written in a local form when we consider the local description
of the incidence variety $\setV_r$ as in Section \ref{sectionlocal}.

We use the block-division of matrix $A$ as in \eqref{blockdivision} with $p=r$
and without loss of generality one can assume to work in the open set $\det \,
N \neq 0$, with $N$ the upper-left $r \times r$ submatrix of $A$. 
We deduce by Lemma \ref{lemma:local:incidence} that the local equations of $\setV_r$
are
\[
Y^{(1)}=-N^{-1}QY^{(2)}, \qquad \Sigma(N) Y^{(2)}=0, \qquad
U^{(1)}Y^{(1)}+U^{(2)}Y^{(2)}=\Id_{s-r},
\]
where $Y^{(1)},Y^{(2)}$ is the row-subdivision of the matrix $Y(y)$ as
in Lemma \ref{lemma:local:incidence} and $U^{(1)},U^{(2)}$ is the
corresponding column-subdivision of $U$. From the first and third
groups of equations one obtains that
$\Id_{s-r}=U^{(1)}(-N^{-1}QY^{(2)})+U^{(2)}Y^{(2)}=(-U^{(1)}N^{-1}Q+U^{(2)})Y^{(2)}$.
Since $\Id_{s-r}$ is full-rank, then $Y^{(2)}$ and
$-U^{(1)}N^{-1}Q+U^{(2)}$ are non-singular, and so:
\begin{itemize}
\item the second group of equations can be re-written as $\Sigma(N) = 0$;
\item the third group of equations can be re-written as
  $Y^{(2)} = (-U^{(1)}N^{-1}Q+U^{(2)})^{-1}$.
\end{itemize}
The entries of $\Sigma(N)$ in the local ring $\QQ[x]_{\det N}$ are exactly
the $\rect{(m-r)(s-r)}$ minors of $A(x)$ obtained as determinants of the $(r+1) \times
(r+1)$ submatrices of $A(x)$ containing $N$
(see for example the proof of \cite[Proposition 3.2.7]{SaSc13}). Since $A$
satisfies $\sfG_1$, the Jacobian
$D_x[\Sigma(N)]_{i,j}$ of the vector of entries of $\Sigma(N)$ has full-rank
at each point $x$ such that $\rank \, A(x) = r$.

We call $f'=(f'_1, \ldots, f'_c)$ the local equations represented by the
entries of $\Sigma(N)$, $Y^{(1)}+N^{-1}QY^{(2)}$ and $Y^{(2)} - (-U^{(1)}N^{-1}Q+U^{(2)})^{-1}$.
The Jacobian matrix of $f'$ has the form
\[
\jac f' =
(D_xf' \,\,\, D_yf') =
\left(
\begin{array}{cc}
D_x[\Sigma(N)]_{i,j} & 0_{\rect{(m-r)(s-r) \times s(s-r)}} \\
\star & 
\begin{array}{cc}
\Id_{\rect{r(s-r)}} & \star \\
0 & \Id_{\rect{(m-r)(s-r)}}
\end{array}
\end{array}
\right)
\]
We consider the polynomials
\[
(g'_1, \ldots, g'_n, h'_1, \ldots, h'_{\rect{s(s-r)}}) = (z_1, \ldots, z_c, 1)
\left(
\begin{array}{cc}
D_xf' & D_yf' \\
w_1 \, \ldots \, w_n & 0
\end{array}
\right).
\]
Polynomials in $h'=(h'_1, \ldots, h'_{\rect{s(s-r)}})$ give the relations $z_i=0$,
for $i=\rect{(m-r)(s-r)+1}, \ldots, c$, and can be eliminated together with variables
$z_i, i=\rect{(m-r)(s-r)}+1, \ldots, c$. So the local equations of the Lagrange system
\eqref{eqlag} are:
\begin{equation}
\label{LocalLagSyst}
f' = 0, \qquad g' = 0.
\end{equation}
This is a square system consisting of $n+c$ polynomials in $n+c$ variables.

\subsection{Proof of Proposition \ref{prop:dimension}} \label{subsec:proof}

\begin{proof}[of Assertion 1 of Proposition \ref{prop:dimension}]
Let $\widetilde{\mathscr{W}}_{A,U} \subset \CC^n$ be the set defined by Lemma \ref{lemma:crit:points},
and $w \in \widetilde{\mathscr{W}}_{A,U}$. Then one has that all solutions $(x,y,z)$ to
\eqref{eqlag} (hence of the local version \eqref{LocalLagSyst}) satisfy $\rank \, A(x) = r$.
We deduce that there exists a $r \times r$ submatrix $N$ of $A(x)$ such that $\det \, N \neq 0$.
We prove below that there exists a non-empty Zariski open set $\mathscr{W}_{N,A,U} \subset \CC^n$ such
that, for $w \in \mathscr{W}_{N,A,U}$, the statement of Assertion 1 holds locally. Hence, to retrieve
the global property, it is sufficient to define $\mathscr{W}_{A,U}$ as the (finite) intersection of
sets $\widetilde{\mathscr{W}}_{A,U} \cap \mathscr{W}_{N,A,U}$, where $N$ varies in the collection of
$r \times r$ submatrices of $A$.

We suppose without loss of generality that $N$ is the upper-left $r \times r$ submatrix of $A$.
Let $(f',g')$ be the local Lagrange system defined in \eqref{LocalLagSyst}. Consider the polynomial
map
\[
  \begin{array}{lrcc}
  \varphi : &  \CC^{n+c} \times \CC^{n} & \longrightarrow & \CC^{n+c} \\
            &  (x,y,z,w) & \longmapsto & (f',g')
  \end{array}
\]
and, for a fixed $w \in \CC^n$, its section map
\[
  \begin{array}{lrcc}
  \varphi_{w} : &  \CC^{n+c} & \longrightarrow & \CC^{n+c} \\
            &  (x,y,z) & \longmapsto & \varphi(x,y,z,w)
  \end{array}.
\]
If $\varphi^{-1}(0) = \emptyset$, then for all $w \in \CC^n$, $\varphi_{w}^{-1}(0) = \emptyset$, and
the claim is proved by taking $\mathscr{W}_{N,A,U} = \widetilde{\mathscr{W}}_{A,U}$ (see Lemma
\ref{lemma:crit:points}).

Suppose now that $\varphi^{-1}(0) \neq \emptyset$ and let
$(x,y,z,w) \in \varphi^{-1}(0)$.  We claim that the Jacobian matrix of
$\varphi$ at $(x,y,z,w)$ has maximal rank. Hence, $0$ is a regular
value for $\varphi$ and by Thom's Weak Transversality Theorem
\cite[Proposition B.3]{SaSc13} there exist a non-empty Zariski open
set $\mathscr{W}_{N,A,U} \subset \CC^n$ such that for
$w \in \mathscr{W}_{N,A,U}$, $0$ is a regular value of
$\varphi_{w}$. This implies that, by the Jacobian criterion, the set
$\setZ(A,U,w) \cap \{(x,y,z) : \det N(x) \neq 0\}$ is empty or
zero-dimensional. We prove below this claim by exhibiting a
non-singular submatrix of $\jac \varphi$.

We remark that, since $f(A,U)$ satisfies $\sfG_2$, the Jacobian matrix $\jac f'$ has maximal rank at
$(x,y)$ and consider the submatrix of $\jac \varphi$ obtained by isolating:
\begin{itemize}
\item a non-singular maximal submatrix of $\jac f'$;
\item the derivatives of $g'_1, \ldots, g'_n$ with respect to $w_1, \ldots, w_n$,
  giving the identity block $\Id_n$.
\end{itemize}
The previous blocks define a submatrix of $\jac \varphi(x,y,z,w)$ of size $(n+c) \times (n+c)$
whose determinant does not vanish at $(x,y,z,w)$.
\end{proof}

\begin{proof}[of Assertion 2 of Proposition \ref{prop:dimension}]
  Let $w \in \CC^n$, and let $(x,y) \in \crit(\Pi_w,\setV_r)$. Since
  $A \in \zarA$, $f(A,U)$ satisfies $\sfG_2$ and $\setV_r(A,U)$ is
  smooth and equidimensional. Hence
  $(x,y) \in \reg(\setV_r(A,U))=\setV_r(A,U)$.  In particular
  $\jac f(x,y)$ has full rank. Since $(x,y) \in \crit(\Pi_w,\setV_r)$,
  the extended Jacobian matrix $\jac(f,\Pi_w)$ has a rank
  defect. Hence, there exists $z = (z_1,\ldots,z_{c+1})\neq 0$ with
  $z^T\jac(f,\Pi_w)(x,y) = 0$. If $z_{c+1}=0$, then
  $0 = z^T\jac(f,\Pi_w) = (z_1,\ldots,z_c)\jac f(x,y)$, which is a
  contradiction since $\jac f(x,y)$ has full rank. Then we can assume
  $z_{c+1}=1$, and hence that $(x,y,z) \in \setZ(A,U,w)$. We conclude
  that $\crit(\Pi_w,\setV_r)$ is contained in the projection of
  $\setZ(A,U,w)$ on $(x,y)$, as claimed.
\end{proof}

\section{Closure properties}\label{sec:closure}

The goal of this section is to prove Proposition \ref{prop:closure}.
We use notation of \cite[Section 5]{HNS2014}, which we recall
below.

{\it Notations.} For $M \in \GL_n(\CC)$, and $\setZ \subset \CC^n$ any set, we define
$$M^{-1}\setZ \coloneqq \{x \in \CC^n : Mx \in \setZ\} = \{M^{-1}x  : x \in \setZ\}.$$
  Remark that, if $w \neq 0$ and if $M \in \GL_n(\CC)$ with $w=M^{-1}_{(1)}$ (the first row of
  $M^{-1}$), then
  \begin{equation}
  \label{chvarw}
  \pi_1(M^{-1}\setZ) = M^{-1}_{(1)}\setZ = \pi_{M^{-1}_{(1)}}(\setZ) = \pi_w(\setZ).
  \end{equation}

  Let $\setZ \subset \CC^n$ be an algebraic variety of dimension $d$.
The $i-$equidimensional component of $\setZ$ is denoted by $\Omega_i(\setZ)$, $i=0, \ldots, d$.
We denote by $\scS(\mathcal Z)$ the union of the following sets:
\begin{itemize}
\item $\Omega_0(\mathcal Z) \cup \cdots \cup \Omega_{d-1}(\mathcal Z)$
\item the set $\sing(\Omega_d(\mathcal Z))$ of singular points of $\Omega_d(\mathcal Z)$
\end{itemize}
and by $\scC(\pi_i, \mathcal Z)$ the Zariski closure of the union of the
following sets:
\begin{itemize}
\item $\Omega_0(\mathcal Z) \cup \cdots \cup \Omega_{i-1}(\mathcal Z)$;
\item the union for $k \geq i$ of the sets $\crit(\pi_i, \reg(\Omega_k(\mathcal Z)))$ of
  critical points of the restriction of $\pi_i$ to the regular locus of $\Omega_k(\mathcal Z)$.
\end{itemize}
For $M \in \GL_n(\CC)$ we recursively define the collection
$\{\mathcal{O}_i(M^{-1}\setZ)\}_{0 \leq i \leq d}$ as follows:
\begin{itemize}
\item ${\mathcal O}_d(M^{-1}\setZ)=M^{-1}\setZ$;
\item ${\mathcal O}_i(M^{-1}\setZ)=\scS({\mathcal O}_{i+1}(M^{-1}\setZ))
\cup \scC(\pi_{i+1},  {\mathcal O}_{i+1}(M^{-1}\setZ)) \cup
\scC(\pi_{i+1},M^{-1}\setZ)$ for $i=0, \ldots, d-1$.
\end{itemize}

\new{We recall that an algebraic set $\mathcal{V}=\zeroset{I} \subset \CC^n$ is in Noether
position with respect to variables $x_1, \ldots, x_i$, if and only if the morphism
$\varphi \colon \CC[x_1, \ldots, x_i] \hookrightarrow \bigslant{\CC[x_1, \ldots, x_n]}{I}$
is injective and integral. If this is the case, the induced morphism $\varphi^* \colon \mathcal{V} 
\to \CC^i$ is the projection $\varphi^*(x) = (x_1, \ldots, x_i)$ and $\varphi^*$ is one-to-one.
}

The following two properties have been defined in \cite[Section 5]{HNS2014}.  

{\it Property $\sfP(\setZ)$.} Let ${\setZ} \subset \CC^n$
be an algebraic set of dimension $d$. We say that $M \in \GL_n(\CC)$
satisfies $\sfP(\setZ)$ when for all $i = 0, 1, \ldots, d$
\begin{enumerate}
\item ${\mathcal O}_i(M^{-1}{\setZ})$ has dimension $\leq i$; 
\item ${\mathcal O}_i(M^{-1}{\setZ})$ is in Noether position with respect to $x_1, \ldots, x_i$. 
\end{enumerate}


{\it Property ${\sf Q}(\setZ)$.} Let $\setZ$ be an algebraic set of
dimension $d$ and $1 \leq i \leq d$. We say that ${\sf Q}_i(\setZ)$ holds if
for any connected component $\cc$ of ${\setZ}\cap \RR^n$ the boundary of
$\pi_i(\cc)$ is contained in $\pi_i({\mathcal O}_{i-1}(\setZ) \cap {\cc})$.
{We say that ${\sf Q}(\setZ)$ holds if ${\sf Q}_i(\setZ)$ holds for
all $1 \leq i \leq d$.}

In \cite{HNS2014} the authors proved that given any algebraic variety $\setZ$ of dimension $d$, Property
${\sf P}(\setZ)$ holds generically in $\GL_n( \CC)$ (Proposition 17) and that if $M \in \GL_n(\CC)$
satisfies ${\sf P}(\setZ)$, then ${\sf Q}(M^{-1}\setZ)$ holds (Proposition 18).
We use these results in the following proof of Proposition \ref{prop:closure}.

\begin{proof}[of Assertion 1 of Proposition \ref{prop:closure}]
  \mynew{Let $A$ and $U$ be as in the hypothesis.}  Let
  $\zarM\mynew{_{A,U}} \subset \GL_n(\CC)$ be the non-empty Zariski
  open set defined in \cite[Proposition 17]{HNS2014} for
  $\setZ = \setD_r$. \mynew{This set might depend on the choice of $A$
    and $U$, as explicited by the indices.}  One obtains that every
  $M \in \zarM\mynew{_{A,U}}\cap \GL_n(\QQ)$ satisfies
  ${\sfP}(\setD_r)$. Remark that since $M \in \GL_n(\QQ)$, there is a
  natural bijective correspondence between the set of connected
  components of $\setD_r \cap \RR^n$ and the ones of
  $M^{-1}\setD_r \cap \RR^n$ given by $\cc \leftrightarrow
  M^{-1}\cc$.
  Fix a connected component
  $M^{-1}\cc \subset M^{-1}\setD_r \cap \RR^n$ and consider the
  projection $\pi_i$ restricted to $M^{-1}\setD_r \cap \RR^n$. Since
  $M \in \zarM\mynew{_{A,U}}$, by \cite[Proposition 18]{HNS2014} the
  boundary of $\pi_i(M^{-1}\cc)$ is contained in
  $\pi_i({\mathcal O}_{i-1}(M^{-1}\setD_r) \cap M^{-1}\cc)$ and in
  particular in $\pi_i(M^{-1}\cc)$. This implies that
  $\pi_i(M^{-1}\cc)$ is closed \mynew{for all $i$. Let
    $\varphi \colon \CC^{n \times n} \to \CC^n$ be the map sending
    $M \in \CC^{n \times n}$ to its first row $M_{(1)}$.  The map
    $\varphi$ is a projection, hence a morphism. Applying \cite[9.1,
    Ex.III]{hartshorne}, one gets that
    $\varphi(\zarM_{A,U})\coloneqq \zarW^{(1)}_{A,U}$ is Zariski-open
    in $\CC^n$.  In particular, for $i=1$ and applying \eqref{chvarw},
    one gets that $\pi_w(\cc)$ is closed, for
    $w \in \zarW^{(1)}_{A,U}$}.
\end{proof}


\begin{proof}[of Assertion 2 of Proposition \ref{prop:closure}] 

  \mynew{Let $A$ and $U$ be as in the hypothesis. Fix $0 \leq p \leq r$.
    Since $A \in \zarA$, by Proposition \ref{prop:regularity} the set $\setD_p$
    has the expected co-dimension, hence $c = (m-p)(s-p)$. Let $\eta_1, \ldots,
    \eta_c$ be local equations for $\setD_p$.
    
    We consider the following constructible set:
    $$
    J = \left\{(x,z,w) : x \in \setD_p \cap {\mathcal{N}_r(A,U)},
    w_i = \sum_j z_j \frac{\partial \eta_j}{\partial x_i}, z \neq 0\right\}
    $$
    and the projection $\pi : \CC^{N} \to \CC^n$, $\pi(x,y,z,w) = w$, with
    $N=2n+c$, where $c$ is the co-dimension of $\setD_p$.
    We claim that $\pi(J)$ is a constructible subset of $\CC^n$ of positive
    co-dimension. Hence there exists a non-empty Zariski open set
    $\zarW^{(2)}_{A,U} \subset (\CC^n \setminus \pi(J))$ such that Assertion 2
    holds for $w \in \zarW^{(2)}_{A,U}$.
    We conclude the proof by defining $\zarW'_{A,U} = \zarW^{(1)}_{A,U} \cap
    \zarW^{(2)}_{A,U}$ (where $\zarW^{(1)}_{A,U}$ has been defined in the proof
    of Assertion 1). 

    We prove now our claim.
    By Assertion (2b) of Proposition \ref{prop:regularity}, the set
    $\setD_p \cap \mathcal{N}_r(A,U)$ has positive codimension in
    $\setD_p$, hence it has codimension at least $c+1$. Moreover, the
    equations $w_i = \sum_j z_j \frac{\partial \eta_j}{\partial x_i}$
    define an algebraic set of co-dimension $n$ (indeed, their
    Jacobian matrix has full rank, having a $n \times n$ identity
    matrix corresponding to the derivatives w.r.t. $w_i$).  We deduce
    that $J$ is a constructible set of dimension at most
    $N-(c+1+n) = n-1$ in $\CC^N$.  Hence its projection $\pi(J)$ has
    dimension at most $n-1$ in $\CC^n$, as claimed.  }
%
\end{proof}

\section{Experiments} \label{sec:experiments}

This section reports on experiments made with a first implementation of our algorithm. Note that for computing rational parametrizations, we use Gr\"obner bases and change of ordering algorithms \cite{fglm}. Our experiments are done using the C library \textsc{FGb}, developed by J.-C. Faug\`ere \cite{faugere2010fgb} and interfaced with {\sc Maple}.

We start by comparing our implementation with implementations of
general algorithms based on the critical point method in {\sc RAGlib}
\cite{raglib}. Next, we comment the behaviour of our algorithm on
special examples that are well-known by the research community working on
linear matrices.

\subsection{Comparison with {\sc RAGlib}}


We have generated randomly linear matrices for various values of $m=s$
(\rect{for simplicity we perform computations on square matrices})
and $n$ and run our implementation for different values of $r$.
By randomness of rational numbers we mean that we generate
couples of integers chosen with uniform distribution in a fixed interval
(here $[-100,100]$).

Our implementation is in \textsc{Maple} and computations have been performed
on an Intel(R) Xeon(R) CPU
$E7540@2.00{\rm GHz}$ 256 Gb of RAM.  We report in Table
\ref{tab:dense} numerical data of our tests. The reader can find an implementation
of our algorithm and the details of these computations at the following permanent link
\begin{center}
\url{https://www.unilim.fr/pages_perso/simone.naldi/lowrank.html}
\end{center}
For any choice of $m=s$,
$2 \leq r \leq m-1$ and $n$, we generate a random dense linear
matrix $A$ and we let {\sf LowRank} run with input $(A,r)$.
\begin{table}[!ht]
\centering
{\scriptsize
\begin{tabular}{|l|rrrr|l|rrrr|}
\hline
$(m,r,n)$ & {\sf PPC} & {\sf LowRank} & {\sf deg} & {\sf maxdeg} & $(m,r,n)$ & {\sf PPC} & {\sf LowRank} & {\sf deg} & {\sf maxdeg} \\
\hline
\hline
$(3,2,2)$ & 0.2      & 6     & 9    & 6  &   $(5,2,3)$  & 0.9 & 0.5  & - & -\\
$(3,2,3)$ & 0.3      & 7.5    & 21   & 12   & $(5,2,4)$  & 1 & 0.5  & - & - \\
$(3,2,4)$ & 0.9      & 9.5    & 33   & 12   & $(5,2,5)$ & 1.6   & 0.5  & -  & - \\
$(3,2,5)$ & 5.1      & 13.5    & 39   & 12   & $(5,2,6)$ & 3   & 0.6  & -  & -  \\
$(3,2,6)$ & 15.5     & 15    & 39   & 12   &  $(5,2,7)$ & 4.2   & 0.7  &  - & -  \\
$(3,2,7)$ & 31       & 16.5    & 39   & 12   & $(5,2,8)$ &  8  & 0.7  & -  & -  \\
$(3,2,8)$ & 109      & 18    & 39   & 12   &   $(5,2,9)$ & $\infty$   & 903   & 175  & 175  \\
$(3,2,9)$ & 230      & 20    & 39   & 12   & $(5,3,2)$  & 0.4 & 0.5  & - & - \\
$(4,2,2)$ & 0.2      & 0.5     & -    & -    & $(5,3,3)$  & 0.5 & 0.5  & - & - \\
$(4,2,3)$ & 0.3      & 0.5    & -    & -    &  $(5,3,4)$  & 43 & 22  & 50 & 50   \\ 
$(4,2,4)$ & 2.2      & 2.5     & 20   & 20   & $(5,3,5)$ & $\infty$ &  5963  & 350  & 300  \\
$(4,2,5)$ & 12.2     & 26    & 100  & 80   &  $(5,4,2)$  & 0.5 & 125  & 25 & 20 \\
$(4,2,6)$ & $\infty$ & 593    & 276  & 176  &  $(5,4,3)$  & 10 & 167  & 105 & 80  \\
$(4,2,7)$ & $\infty$ & 6684   & 532  & 256  &  $(5,4,4)$  & $\infty$ & 561  & 325 & 220 \\
$(4,2,8)$ & $\infty$ & 42868  & 818  & 286  & $(5,4,5)$  & $\infty$ & 5574  & 755 & 430    \\
$(4,2,9)$ & $\infty$ & 120801 & 1074 & 286  & $(6,3,3)$  & 4 & 1  & - & -  \\
$(4,3,3)$ & 1        & 8    & 52   & 36   &  $(6,3,4)$  & 140 & 1  & - & -   \\
$(4,3,4)$ & 590      & 18    & 120  & 68   &  $(6,3,5)$  & $\infty$ & 1  & - & -  \\ 
$(4,3,5)$ & $\infty$ & 56    & 204  & 84   &  $(6,3,6)$  & $\infty$ & 2  & - & -   \\ 
$(4,3,6)$ & $\infty$ & 114    & 264  & 84   & $(6,3,7)$  & $\infty$ & 2  & - & -  \\ 
$(4,3,7)$ & $\infty$ & 124    & 284  & 84   & $(6,3,8)$  & $\infty$ & 2  & - & -  \\  
$(4,3,8)$  & $\infty$ & 124  & 284 & 84   &  $(6,4,2)$  & 0.6 & 40 & - & -     \\
$(4,3,9)$  & $\infty$  & 295  & 284 & 84  &  $(6,4,3)$  & 1 & 64 & - & -  \\
$(4,3,10)$ & $\infty$ & 303  & 284 & 84 &    $(6,4,4)$  & 341  & 300  & 105 & 105   \\ 
$(4,3,11)$ & $\infty$ & 377  & 284 & 84 &  $(6,5,3)$  & 95  & 276  & 186 & 150\\
$(5,2,2)$  & 0.6  & 0.5  & - & - & $(6,5,4)$ & $\infty$ & 8643  & 726 & 540 \\
\hline
\end{tabular}
}
\caption{Timings and degrees for dense linear matrices}
\label{tab:dense}
\end{table}
We compare our algorithm ({timings in seconds} reported in column ``{\sf LowRank}'') with the function {\sf PointsPerComponents} (column ``{\sf PPC}'') of {\sf RAGlib} \cite{raglib}.
The input of {\sf PointsPerComponents} is the list of all $(r+1) \times (r+1)$ minors of the matrix $A(x)$ (generating the ideal of the set of matrices of rank $\leq r$ in the pencil $A(x)$).

The symbol $\infty$ means that no result has been returned after $4$ days of computation. \new{In column {\sf deg} we report the degree of $q_{n+1}$ in the output parametrization. Finally, in column {\sf maxdeg} we report the maximum of the degrees of the partial rational parametrizations (corresponding to the different recursive steps of {\sf LowRank}). When {\sf deg} is ``-'', we mean that the empty list is returned}.

We make the following remarks about Table \ref{tab:dense}.
\begin{enumerate}
\item We first observe that our algorithm is most of the time faster
  than {\sf RAGlib} and it allows to tackle examples that are out of
  reach of {\sf RAGlib}. 
\item The growth in terms of timings with respect to $n$ seems to
  respect the corresponding growth in terms of degrees of output
  parametrizations; in particular note that we have established that
  for $r$ and $m$ fixed, the sum of the degrees of parametrizations we
  need to compute stabilizes when $n$ grows. This is observed in
  practice of course and is reflected in our timings compared to those
  of {\sf RAGlib}.
\item Accordingly to the related Multilinear B\'ezout Bounds computed in
  section \ref{ssec:compl:mbb}, the degrees of rational parametrizations
  stabilize when $n$ grows, since when $n>\rect{(m+r)(s-r)}$ and the input is generic,
  {\sf LowRank} does not compute critical points at first calls. This fact
  is remarkable, since:
  \begin{itemize}
  \item it is known (see \cite[Ch.\,II]{arbarello2011geometry}) that
    a natural geometric invariant associated to $\setD_r$, its degree
    as complex algebraic set, does not depend on the dimension $n$ of the
    affine section (one can prove easily that generically this degree is
    given by the Thom-Porteous-Giambelli formula,
    {\it cf.} \cite[Ch.\,II,\S\,4]{arbarello2011geometry});
  \item an algebraic invariant naturally associated to the output-size
    (the degree of $q_{n+1}$) is constant in $n$, coherently with the aforementioned
    geometric invariant.
  \end{itemize}
\end{enumerate}

Finally, we give a final remark on potential {\it a posteriori} verification
of the correctness of the output of {\sf LowRank}. Deciding whether a finite
set, encoded by a rational parametrization, meets every connected component
of a given real algebraic set, is a hard problem, far from being solved, 
both from a theoretical and computational viewpoint. As far as the authors
know, there are no symbolic or numerical algorithms able to perform this task.
Also, producing such a certificate seems to be hard to imagine, but this
was not among the goals of this paper.
\mynew{In the recent paper \cite{SaSc13}, an algorithm to address connectivity
  queries (for instance, deciding whether $2$ points of a smooth compact real
  algebraic set lie on the same connected component) has been developed.}


\subsection{Examples}

In this last section, we consider some examples of linear matrices
coming from the literature, and we test the behavior of {\sf LowRank}.
We consider examples of symmetric linear matrices since, as observed
in Section \ref{motiv}, the main motivation for solving the
real root finding problem is to obtain dedicated algorithms for
spectrahedra and semidefinite programming. All these tests can be
replicated via the Maple library {\sc Spectra} via the function $SolveLMI$
(cf. the documentation of Spectra and \cite{spectra} for details).

\begin{example}[The Cayley cubic] \label{ex:cayley}
We consider the $3 \times 3$ linear matrix
\[
A(x) =
\left(
\begin{array}{ccc}
1 & x_1 & x_2 \\
x_1 & 1 & x_3 \\
x_2 & x_3 & 1
\end{array}
\right).
\]


The convex region $\{x \in \RR^3 \ \big| \ A(x) \succeq
0\}$ is the Cayley spectrahedron. We run
our algorithm
with input $(A,r)$ with $r=2$ and $r=1$ (the case $r=0$
is trivial since $A(x)$ is always non-zero and hence
$\setD_0$ is empty). In both cases, the algorithm first
\mynew{chooses a random matrix $U$, then} verifies that
the genericity assumptions are satisfied.

Let us first analyze the case $r=2$. Our algorithm
runs $3$ recursive steps.
{Its output is a rational parametrization of degree
  $14$ with $12$ real solutions and $2$ complex solutions. We give
  below details of each recursive call of {\sf LowRankRec}. At the
  first, at step \ref{new:rec:5}, a rational parametrization of
  degree $5$ is computed, with the following $5$ real solutions:}
\[
\left\{\left(\begin{array}{c}1\\1\\1\end{array}\right), \:
\left(\begin{array}{c}1\\-1\\-1\end{array}\right), \:
\left(\begin{array}{c}-1\\1\\-1\end{array}\right), \:
\left(\begin{array}{c}-1\\-1\\1\end{array}\right), \:
\left(\begin{array}{c} 18.285118452 \\ 164.322822823 \\ 4.552268485 \end{array}\right)
\right\}.
\]
The coordinates of the fifth point are approximated to
$9$ certified digits and such approximation can be computed
by isolating the coordinates in intervals of rational numbers as:
\[
\begin{array}{l}
x_1 \in [\frac{21081306277346124211}{1152921504606846976}, \frac{21081306277346754459}{1152921504606846976}]
\approx  18.285118452 \\[.3em]
x_2 \in [\frac{5920353629066611305}{36028797018963968}, \frac{23681414516266799197}{144115188075855872}]
\approx  164.322822823 \\[.3em]
x_3 \in [\frac{10496816461511385723}{2305843009213693952}, \frac{2624204115377866059}{576460752303423488}]
\approx 4.552268485.
\end{array}
\]

Remark that it also computes the $4$ singular points of $\setD_2$, where
the rank of $A$ is $1$. At the second (resp. third) recursive call, it
computes a rational parametrization of degree $6$ (resp. of degree $3$)
with $4$ (resp. $3$) real solutions. 

In the case $r=1$, step \ref{new:rec:4} of {\sf LowRankRec} returns a
rational parametrization of degree $4$ which encodes the $4$ singular
points of $\setD_2 \cap \RR^3$, that is $\setD_1 \cap \RR^3$. {At the
second and third recursions, {\sf LowRankRec} returns empty lists.}


{We finally remark that the above results are typical, in the sense
  that the $4$ singular points contained in $\setD_1 \cap \RR^3$ are
  always computed at the first recursion step, both in case $r=2$ and
  $r=1$.  Conversely, the coordinates of the other real solutions
  depend on the choice of random parameters (while their number is
  constant).  Moreover, all computations end after a few seconds ($< 5$
  s).}

\end{example}


\begin{example} \label{ex:berk}
Let
\[
A(x) =
\left(
\begin{array}{cccc}
a_{1} & x_1 & x_2 & x_3 \\
x_1 & a_{2} & x_3 & x_4 \\
x_2 & x_3 & a_{3} & x_5 \\
x_3 & x_4 & x_5 & a_{4}
\end{array}
\right),
\]
where $x=(x_1,x_2,x_3,x_4,x_5)$ are variables and $(a_{1},a_{2},a_{3},a_{4})$
are parameters. This example shows that the genericity assumptions of our
algorithm are satisfied up to small perturbation of the input data.

We let $(a_{1},a_{2},a_{3},a_{4})$ vary randomly in $\QQ^4$. For all random
instances, we observe that the inputs $(A,3),(A,2)$ and $(A,1)$ verify the genericity
assumptions, and that the degrees of the rational parametrizations
returned at each recursion step are constant, while the number of
real solutions changes with parameters. We summarize our results in Table
\ref{tab:example:berkeley}.

\begin{table}[!ht]
  \centering
  \begin{tabular}{|c||c|c|c|}
    \hline
     & $r=3$ & $r=2$ & $r=1$  \\
    \hline
    \hline
    partial degrees   & [12 24 24 12 4]   & [12 20 8 0 0]  & [0 0 0 0 0]  \\
    \hline
    total degree   & 76   & 40  & 0  \\
    \hline
    time (s)  & 768  & 21.5  & 4.6  \\
    \hline
  \end{tabular}
  \vspace{0.4cm}
  \caption{Degrees and timings for Example \ref{ex:berk} with generic parameters}
  \label{tab:example:berkeley}
\end{table}


\end{example}

\begin{example}[The pillow] \label{ex:pillow}
Let
\[
A(x) =
\left(
\begin{array}{cccc}
1 & x_1 & 0 & x_1 \\
x_1 & 1 & x_2 & 0 \\
0 & x_2 & 1 & x_3 \\
x_1 & 0 & x_3 & 1
\end{array}
\right).
\]
The spectrahedron $S = \{x \in \RR^3 : A(x) \succeq 0\}$ is known
as {\it the pillow}.
%

The Zariski closure of
its boundary is the real trace of the complex hypersurface defined by the vanishing of
\[
\det A(x) = 1-x_3^2-x_2^2-2x_1^2+x_1^2x_3^2-2x_1^2x_2x_3+x_1^2x_2^2.
\]

We tested our algorithm with input $(A,2)$. We obtain
that at the first recursion, at step \ref{new:rec:4}
a rational parametrization $q=(q_0,q_1,q_2,q_3,q_4)$
of degree $4$ (with only real roots)
is computed. By isolating the $4$ real roots of $q_4$
as in Example \ref{ex:cayley}, one gets the following rational
approximation of one of the singular points:
\[
\begin{array}{l}
x_1\in [-\frac{6521908912666475339}{9223372036854775808}, -\frac{13043817825332644843}{18446744073709551616}] \approx -{\sqrt{2}}/{2} \\ [.3em]
x_2\in [\frac{26087635650665343561}{36893488147419103232}, \frac{6521908912666428733}{9223372036854775808}] \approx {\sqrt{2}}/{2} \\[.3em]
x_3\in [-\frac{6521908912666412349}{9223372036854775808}, -\frac{13043817825332731855}{18446744073709551616}] \approx -{\sqrt{2}}/{2}.
\end{array}
\]



\end{example}


\begin{acknowledgements}
  The second author is supported by the FMJH Program PGMO and would like to thank LAAS-CNRS and the Mathematics Department of Technische Universit\"at Dortmund where he was working when a significant part of this work was designed. The third author is supported by Institut Universitaire de France. We thank the reviewers for their help in improving the first version of this paper.
\end{acknowledgements}

\def\cfac#1{\ifmmode\setbox7\hbox{$\accent"5E#1$}\else
  \setbox7\hbox{\accent"5E#1}\penalty 10000\relax\fi\raise 1\ht7
  \hbox{\lower1.15ex\hbox to 1\wd7{\hss\accent"13\hss}}\penalty 10000
  \hskip-1\wd7\penalty 10000\box7}

\end{document}